\crefname{equation}{Eq.}{Eqs.}
\newtheorem{theo}{Theorem}[section]
\newtheorem{defi}[theo]{Definition}
\newtheorem{coro}[theo]{Corollary}
\newtheorem{lemma}[theo]{Lemma}
\DeclareMathOperator{\geo}{geo}
\DeclareMathOperator{\diam}{diam}
\newcommand{\R}{\mathbb{R}}
\newcommand{\Z}{\mathbb{Z}}
\newcommand{\Prob}{{\mathbb P}}
\newcommand{\Vb}{{V_\bullet}}
\newcommand{\Vw}{{V_\circ}}
\newcommand{\pdf}{{f}}
\newcommand{\pdfAD}{{f}}
\newcommand{\hardoutcome}{\hat{m}}
\newcommand{\idealoutcome}{\overline{m}}
\newcommand{\softoutcome}{m}
\DeclareMathOperator{\T1m}{\tau_A}
\DeclareMathOperator{\soft}{soft}
\begin{document}

\title{Improved quantum error correction using soft information}

\author[1]{Christopher A. Pattison}
\author[2]{Michael E. Beverland}
\author[2]{Marcus P. da Silva}
\author[2]{Nicolas Delfosse}

\affil[1]{Caltech, Institute for Quantum Information and Matter, Pasadena, USA}
\affil[2]{Microsoft Quantum and Microsoft Research, Redmond, USA}

\twocolumn[
  \begin{@twocolumnfalse}
    \maketitle
\begin{abstract}
The typical model for measurement noise in quantum error correction is
to randomly flip the binary measurement outcome. In experiments,
measurements yield much richer information---e.g., continuous current
values, discrete photon counts---which is then mapped into binary
outcomes by discarding some of this information. In this work, we
consider methods to incorporate {\em all} of this richer information,
typically called {\em soft information}, into the decoding of quantum
error correction codes, and in particular the surface code.  We
describe how to modify both the Minimum Weight Perfect Matching and
Union-Find decoders to leverage soft information, and demonstrate
these soft decoders outperform the standard {\em (hard)} decoders that
can only access the binary measurement outcomes.  Moreover, we observe
that the soft decoder achieves a threshold $25\%$ higher than any hard
decoder for phenomenological noise with Gaussian soft measurement
outcomes.  We also introduce a soft measurement error model with
amplitude damping, in which measurement time leads to a trade-off
between measurement resolution and additional disturbance of the
qubits.  Under this model we observe that the performance of the
surface code is very sensitive to the choice of the measurement
time---for a distance-19 surface code, a five-fold increase in
measurement time can lead to a thousand-fold increase in logical error
rate.  Moreover, the measurement time that minimizes the physical
error rate is distinct from the one that minimizes the logical
performance, pointing to the benefits of jointly optimizing the
physical and quantum error correction layers.
\end{abstract}

\bigskip
  \end{@twocolumnfalse}
]

Due to the presence of errors and noise in quantum hardware, quantum
error correction is essential for the scalability and usefulness of
quantum computers~\cite{shor1996}. The most promissing approaches to
fault-tolerant quantum error correction revolve around the surface
code~\cite{kitaev2003top_codes, raussendorf2007fault}, which tolerates
high error rates and is naturally implemented one a square grid of
qubits using only local gates. Numerical simulations show that the
surface code performs well for a variety of noise
models~\cite{dennis2002tqm}, including gates corrupted by
stochastic~\cite{raussendorf2007fault, fowler2009high} and
coherent~\cite{bravyi2018surface_code_coherent_noise} errors.  The
error models that are typically considered include a very simple
representation of noisy measurements where binary measurement
outcomes during computations are flipped with some probability.

However, measurements in physical realizations of quantum computers
are rarely, if ever, binary. Measurement outcomes are physically
represented by much richer quantities, such as continuous currents or
voltage, and are converted into binary outcomes by additional
processing that discards some information. This is no different from
the classical setting, where information is similarly represented by
currents and voltages (including non-binary information, such as in
the case of flash memories~\cite{wang2011softLDPC}). In classical
error correction, this richer physical representation, often referred
to as {\em soft information} (in contrast to the {\em hard} or {\em
sharp} features of of binary information), is exploited to obtain
greater tolerance to errors, or to reduce the noise required to
achieve some logical error rate~\cite{costello2007channel}. This
naturally raises the question of how to extract similar benefits from
soft information in the quantum setting. However, the classical
approach of directly measuring the soft values of all bits and
computing parities in noiseless post-processing cannot be applied to
the quantum setting, as it would destroy the quantum superpositions
being protected. Moreover, parities computed with quantum circuits can
propagate errors, since these quantum circuits are noisy. 
Fault-tolerant quantum error correction with soft information requires
non-trivial modifications.

This challenge has been partially addressed in systems with continuous
variable encodings of
qubits~\cite{gottesman2001encoding,fukui2017analog,vuillotquantum,
noh2020fault, noh2021low, chamberland2020building}.  Here we consider
a more general setting, independent of the physical qubit encoding and
grounded on the conditional distributions of the measurement
outcomes. Our main result is the design of generic {\em soft decoders}
for the surface code: a soft minimum weight perfect matching (MWPM)
decoder (which we show identifies a most likely fault set for soft
information), and a soft Union-Find (UF) decoder (which is an
approximation to the MWPM decoder, but with comparable performance and
low computational complexity). Both these decoders are modifications
of existing surface code decoders~\cite{dennis2002tqm,delfosse2017UF},
with similar computational overhead, but may also be applied to other
codes~\cite{bombin2007homological,delfosse2014projection,chao2020optimization,delfosse2021unionfind,hastings2021dynamically}.

We evaluate the performance of these soft decoders numerically by
introducing concrete measurement error models with soft information:
one where the measurement outcome is correpted by white Gaussian noise
(what we call {\em Gaussian soft noise}), and another where the
measurement outcome is corrupted by amplitude damping and Gaussian
noise (what we call {\em Gaussian soft noise with amplitude damping}).
We observe several remakable properties of the soft UF decoder in the
presence of soft Gaussian noise: the soft UF decoder outperforms the
hard UF decoder (both in terms of threshold and peformance below
threshold), and it also achieves a threshold that is 25\% higher than
the optimal threshold achievable by any hard
decoder~\cite{wang2003confinement_higgs}.  Moreover, we find that
logical performance is highly sensitive to the measurement time. In
the case of soft Gaussian noise with amplitude damping, minimizing
physical measurement error rate can lead to logical error rates that
are nearly 1000 times larger over what is achievable, despite such
minimization leading to a measurement time that is only 5 times larger
than the optimum.

We briefly review surface codes and provide some introductory
definitions in \cref{sec:background}.  This section also describes our
general framework to describe soft measurements, and gives specific
examples including a model of the amplitude damping channel during
measurement.  In \cref{sec:graphical_model}, we define a general noise
model with soft measurement noise and we discuss the generalization of
the traditional phenomenological model and circuit models.  The soft
MWPM decoder and the soft UF decoder are presented
in \cref{sec:soft_decoding}.  Then, we prove
in \cref{sec:decoder_success_proof} that the soft MWPM decoder returns
a most likely fault set. We also provide a sufficient condition for
the success of the soft UF decoder, proving that these two soft
decoders perform well.  In \cref{sec:numerics}, we evaluate the
performance of the soft UF decoder for the phenomenological model and
the circuit model with soft noise.  Finally,
in \cref{sec:measure_time_tradeoff}, we use a toy model to illustrate
the benefits of optimizing measurement with awareness of the
fault-tolerant error correction protocol.

\section{Background and definitions} \label{sec:background}

\subsection{Graph and hypergraphs}

In this section, we review the language of graph and hypergraph theory~\cite{berge1973graphs}.

A {\em graph} is a pair $G = (V, E)$ where $V$ is the vertex set and $E$ is the edge set. 
An edge $e \in E$ is a pair of vertices $e = \{u, v\}$. In this work, we consider only finite graphs. 
We will allow graphs to contain multiple copies of the same edge $\{u, v\}$.

To describe subsets of edges and vertices, it is convenient to introduce the language of chain complexes. 
A {\em 1-chain} in a graph is defined to be a formal sum of edges $x = \sum_{e \in E} x_e e$ where $x_e \in \Z_2$.
The 1-chain $x$ can be interpreted as the subset of edges $e$ such that $x_e = 1$.
Conversely, any subset of $E$ defines a $1$-chain.
Similarly, a {\em 0-chain}, which represents a subset of $V$, is defined to be a formal sum of vertices $y = \sum_{v \in V} y_v v$ where $y_v \in \Z_2$.
Given the correspondence between subsets and chains, we use the notation $e \in x$ (respectively $v \in y$) to refer to the fact that $e$ (respectively $v$) belongs to the subset of edges corresponding to $x$ (respectively $y$).

For $i=0, 1$, the set of $i$-chains, denoted $C_i$, is a $\Z_2$-linear space equipped with the component-wise binary addition.
The {\em boundary map} is a $\Z_2$-linear map $\partial$ from $C_1$ to $C_0$.
The boundary of an edge $e = \{u, v\}$ is defined to be 
$
\partial(e) = u + v.
$
and by linearity we have 
$
\partial(x) = \sum_{e \in E} x_e \partial(e)
$
for all $x \in C_1$.
In other words, the boundary of a set of edges $A \subset E$ is the set of vertices that are incident with an odd number of edges of $A$.
Note that if $e_1$ and $e_2$ are two edges in $E$ which correspond to the same pair of vertices, we have that $\partial(e_1+e_2)=0$ by addition in $\Z_2$.

We also consider the restriction of the boundary to a subset of vertices.
Given a subset $U \subset V$, the {\em restricted boundary map} 
$
\partial_U: C_1 \rightarrow C_0
$
is defined by
$
\partial_U(x) = \sum_{v \in \partial(x) \cap U} v.
$

In this work, we also consider finite hypergraphs. 
A {\em hypergraph} is defined as a pair $H = (V, E)$ where $V$ is the vertex set and $E$ is the set of hyperedges.
The difference from a graph is that a hyperedge can be an arbitrary subset of $V$ that may contain more than two vertices.
Like in the case of graphs, a hypergraph can contain multiple copies of the same hyperedge.
When the context is clear, we will often use the term edges to refer to the hyperedges of a hypergraph.

The chain complex language extends to hypergraphs.
A 1-chain in a hypergraph is a formal sum of hyperedges and a 0-chain is a formal sum of vertices.
The boundary of a hyperedge $e = \{v_1, \dots, v_m\}$ is defined to be 
$
\partial(e) = v_1 + \dots + v_m
$
and the boundary map is extended to all 1-chains by linearity as for graphs.

\subsection{Surface codes}
\label{subsec:surface_codes}

In this section, we briefly review\footnote{Refs.~\cite{dennis2002tqm, fowler2012surface, litinski2019game} provide more complete overviews of quantum error correction and quantum computation with surface codes.} the implementation of quantum error correction with the surface code~\cite{kitaev2003fault}.
We focus on the rotated surface code represented in \cref{fig:rotated_code}(a). 
However, all the results of this article generalize immediately to any surface code with or without boundary, and also to hyperbolic surface codes~\cite{breuckmann2017hyperbolic}.

The rotated surface code with minimum distance $d$ encodes one logical qubit in a grid of $d \times d$ data qubits.
Error correction with the surface code is based on so-called {\em plaquette measurements} which each return one outcome bit.
Based on the outcomes extracted, a classical algorithm known as a decoder is used to identify errors which corrupt the data qubits.

A plaquette measurement is a weight-four Pauli measurement $X_a X_b X_c X_d$ or $Z_a Z_b Z_c Z_d$ acting on the four qubits $a, b, c, d$ of a face of the grid of qubits.
The boundary plaquettes represent weight-two measurements $X_e X_f$ or $Z_e Z_f$ acting on the two qubits of a face along the boundary of the grid. 

A Pauli operator which is a product of plaquette operators is called a {\em stabilizer} of the surface code. This is because it acts trivially on encoded states.
We sometimes refer to the measured plaquette operators as {\em stabilizer generators}.

By symmetry, $X$ errors and $Z$ errors can be corrected with the same strategy.
In this section, we describe the correction of $X$ errors based on the outcome of the measurement of $Z$ plaquettes. 

\begin{figure}
\centering
\begin{tikzpicture}[
    scale=1,
    ]

    \foreach \i in {0, ..., 1} {
        \foreach \j in {0, ..., 1} {
            \draw[fill=gray!60] ($(2*\i,1+2*\j)$) +(0,0) -- +(1,0) -- +(1,1) -- +(0,1) -- cycle;
            \draw[fill=gray!60] ($(1+2*\i,2*\j)$) +(0,0) -- +(1,0) -- +(1,1) -- +(0,1) -- cycle;
        }
    }
    \foreach \i in {0, ..., 1} {
        \draw[fill=gray!60] ($(2*\i-1,0)$) +(1,0) -- +(2,0) arc(0:-180:0.5) -- cycle;
        \draw[fill=gray!60] ($(2*\i+1,2)$) +(0,2) -- +(1,2) arc(0:+180:0.5) -- cycle;
    }

    \foreach \i in {0, ..., 1} {
        \foreach \j in {0, ..., 1} {
            \draw[fill=gray!20] ($(  2*\i,  2*\j)$) +(0,0) -- +(1,0) -- +(1,1) -- +(0,1) -- cycle;
            \draw[fill=gray!20] ($(1+2*\i,1+2*\j)$) +(0,0) -- +(1,0) -- +(1,1) -- +(0,1) -- cycle;
        }
    }
    \foreach \i in {0, ..., 1} {
        \draw[fill=gray!20] ($(0,2*\i+1)$) +(0,0) -- +(0,1) arc(90:270:0.5) -- cycle;;
        \draw[fill=gray!20] ($(2,2*\i-1)$) +(2,1) -- +(2,2) arc(90:-90:0.5) -- cycle;;
    }

    \foreach \i in {0, ..., 4} {
        \foreach \j in {0, ..., 4} {
            \draw[fill=gray] (\i, \j) circle (0.1);
        }
    }
\end{tikzpicture}

(a) 

\includegraphics[scale=.4]{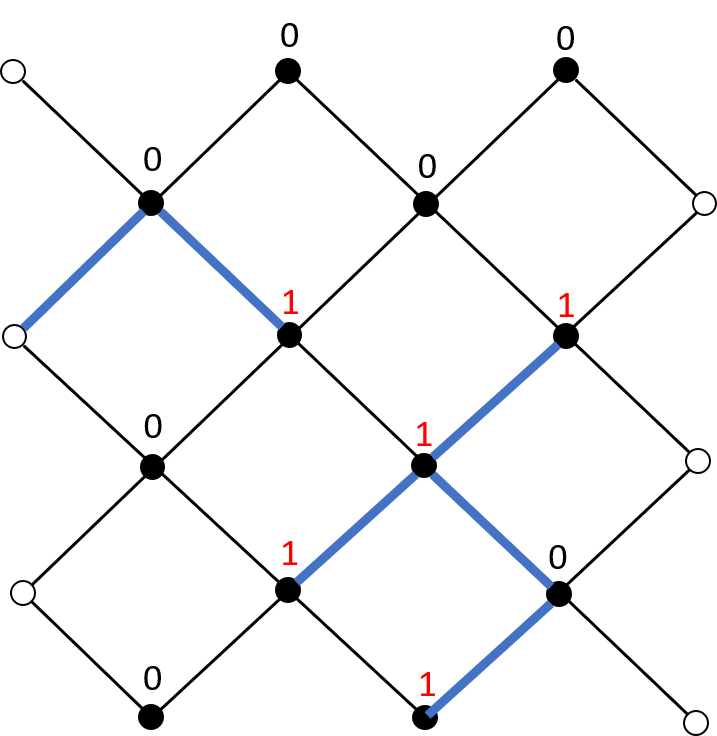}

(b)

\caption{
(a) The distance-five rotated surface code, which encodes one logical qubit into a grid of $5\times 5$ data qubits (grey vertices). 
Blue and white plaquettes support $Z$ and $X$ measurements respectively.
(b) A graphical representation of $X$ errors in the distance-five surface code. 
Each of the 25 data qubits corresponds to an edge of the graph.
Black vertices support $Z$ plaquette measurements.
White vertices are added to provide a second endpoint for edges associated with a qubit that is incident to a single $Z$ measurement.
During a stabilizer measurement round, each black vertex provides an outcome bit for a $Z$ plaquette. 
The set of blue edges represents an $X$ error, 
and the outcome in a black vertex is given by the parity of the number of incident blue edges.
}
\label{fig:rotated_code}
\end{figure}

\medskip
First let's assume that there is some error on the qubits but that the plaquette measurements are performed perfectly.
The value of the outcome bit returned by a plaquette measurement is then given by the parity of the number of qubits in the plaquette supporting the error. 
As a result, errors and measurement outcomes exhibit a natural graph structure as one can see in \cref{fig:rotated_code}(b).
An $X$ error, which is supported on the edges of the decoding graph $G_{X}$, can be represented by the 1-chain 
$
x = \sum_{e \in E} x_e e
$
such that $x_e = 1$ iff the edge $e$ suffers from an $X$ error.
The set of outcomes produced by all the plaquette measurements is represented by a 0-chain
$
m = \sum_{v \in V} m_v v
$
such that $m_v = 1$ iff the outcome bit returned in vertex $v$ is non-trivial.

In practice, the measurement of a plaquette is not perfect.
It is typically implemented using a small circuit that uses an additional ancilla qubit placed in the center of the plaquette.
The ancilla qubit is prepared in an initial state $\ket 0$ or $\ket +$ and a sequence of CNOT gates is performed between the ancilla qubit and the plaquette's qubits.
Then, the outcome of the plaquette measurement is extracted by measuring the ancilla qubit in the $Z$ or $X$ basis.
It is possible to simultaneously extract the outcome bit of all plaquettes in 6 time steps of operations by interleaving the CNOT gates in a particular order (see \cref{subsec:circuit_model}).
We refer to the extraction of bits for the full set of a plaquettes of the surface code a {\em stabilizer measurement round}.

A single stabilizer measurement round is not enough to achieve good performance, because the measurement circuits are implemented with imperfect gates and the outcome measured is unreliable.
To account for measurement errors, the standard solution is to perform error correction based on multiple consecutive rounds of measurement data, as we describe in more detail later.

\subsection{Soft measurement}
\label{subsec:measurement}

Measurements are subject to noise, as are other operations in quantum hardware.
Three standard models of noise are typically considered in the field of quantum error correction. These all assume the depolarizing channel during operations on the qubits (including idling operations), but the models differ in how they treat plaquette measurements. 
The simplest is the \emph{ideal measurement model} where plaquette measurements are assumed to be perfect.
In the \emph{phenomenological noise model}, this is amended by assuming that following each perfect plaquette measurement the binary outcome can be flipped before being reported.
The \emph{circuit noise model} is the most realistic of the standard models.
In this case, plaquette measurements are implemented using explicit circuits built from more basic operations, including single-qubit measurements with a possible flip of the binary outcome. 

In all three of these models, the outcomes of the stabilizer
measurements are discrete, and therefore decoding algorithms are
typically designed to operate using binary measurement outcomes.
However, due to physical details of the measurement apparatus, the
output from a physical system will not be a simple binary value, and
may range from photon counts to continuous voltage waveform. This
output may be processed into a binary outcome, but this processing may
lead to information loss.

Here we seek to go beyond this paradigm. 
To do so, we introduce a simple model of measurement which produces a more general output; see \cref{fig:measurement_model}.
More specifically, in our model of measurement there is first a
projection of the state into a subspace $\pi_{\bar{\mu}}$ as in the
standard models, but the \emph{ideal outcome} $\bar{\mu} \in \{0, 1\}$
cannot be observed directly.  What is observed is instead a \emph{soft
  outcome} $\mu$, whose distribution is given by the probability
density function $\pdf^{(\bar{\mu})}(\mu)$, which is conditioned on
the ideal outcome $\bar{\mu} = 0$ or $1$. Our formalism can easily
handle both discrete and continuous measurement outcomes with minimal
modifications. For brevity, we focus on continuous-valued soft
measurement outcomes in this article.

Given the soft outcome, we can try to infer the ideal outcome by
mapping the soft outcome onto a binary outcome that we refer to as the
{\em hard outcome} $\hat{\mu}$.  One particular choice for this {\em
  hardening map} corresponds to a maximum likelihood assignment, where
\begin{align} \label{eq:def_hard_outcome}
\hat \mu = 
\begin{cases}
0 & \text{ if } \pdf^{(0)}(\mu) \geq \pdf^{(1)}(\mu),\\
1 & \text{ otherwise.}
\end{cases}
\end{align}

\begin{figure}
\centering
\includegraphics[width=0.45\textwidth]{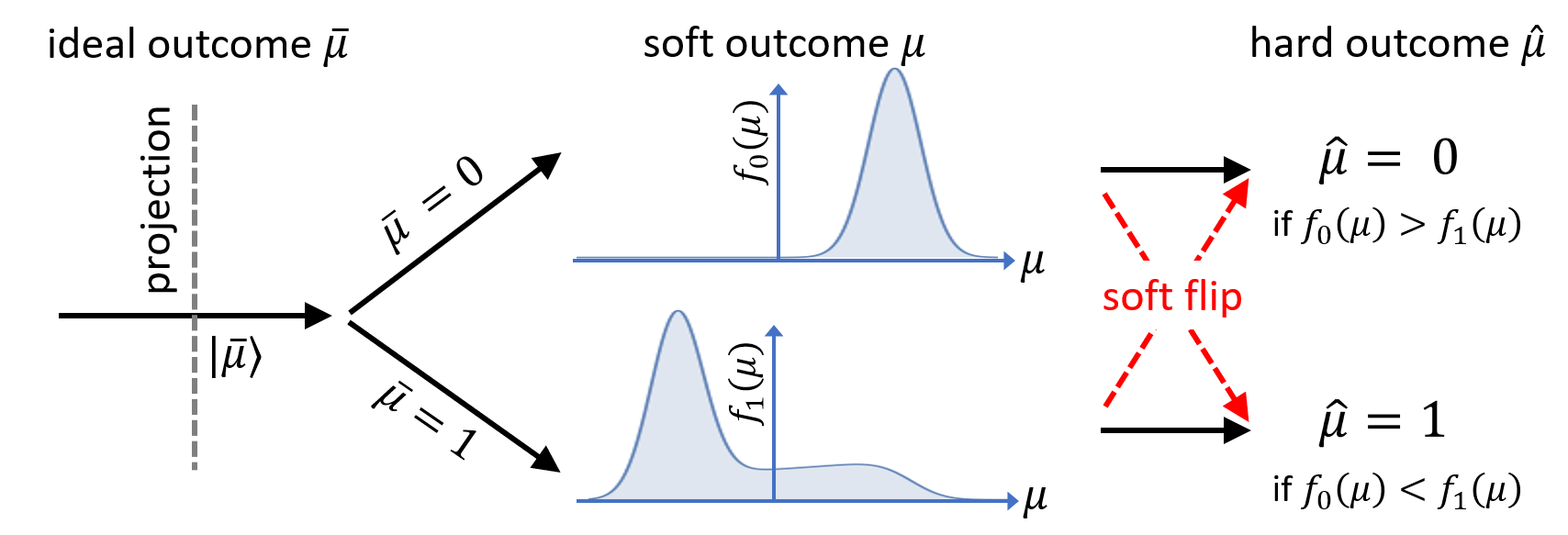}
\caption{
We model measurements as a perfect projection with ideal discrete outcome $\bar{\mu}$, followed by a noisy process which outputs a continuous soft outcome $\mu$.
The soft outcome can be processed to produce a discrete guess $\hat{\mu}$ of the ideal outcome, for example by comparing the probability density functions for $\mu$ conditioned on different cases of $\bar{\mu}$.
There can be some loss of information in going from a soft to a hard outcome.
When the hard outcome does not match the ideal outcome, we say a soft flip has occurred.
Standard decoders use the hard outcome as their input, whereas we propose using the soft outcome directly for decoding.
}
\label{fig:measurement_model}
\end{figure}

We say that a {\em soft flip} occurs during a measurement if the hard outcome inferred is not equal to the ideal outcome.
If the ideal outcome is $\bar \mu$, the probability of a soft flip  is given by
\begin{align} \label{eqn:soft_flip_prob}
\Prob(\text{soft flip} | \bar \mu) 
= 
\int_{f^{(\bar \mu')}(\mu) > f^{(\bar \mu)}(\mu)} \pdf^{(\bar \mu)}(\mu) \text{d}\mu,
\end{align}
where $\bar \mu' = \bar \mu + 1 \pmod 2$.
In general, this probability depends on the value of the ideal outcome $\bar \mu$.

We say that a measurement is {\em symmetric} if the probability of a soft flip is the same for both values of the ideal outcome $\bar \mu = 0$ or 1.
We then define the probability $p_{M, \soft}=\Prob(\text{soft flip} | 0) =\Prob(\text{soft flip} | 1) $ that a hard decoder interprets the soft outcome as a flip after hardening.
In this work we often consider scenarios in which there can first be a flip of the ideal measurement outcome with probability $p_M$ preceding symmetric soft noise.
Then the probability $p_{M,\text{hardened}}$ of a flip overall after hardening is
\begin{align}
p_{M,\text{hardened}} = p_{M} + p_{M, \soft} - p_{M}p_{M, \soft}.\label{eq:net_hardened_flip_prob}
\end{align}
A simple example of a symmetric soft measurement is the {\it Gaussian soft noise} in which a $\pdf^{(0)} = {\cal N}(+1, \sigma^2)$ and $\pdf^{(1)} = {\cal N}(-1, \sigma^2)$.
Some physical processes may lead to asymmetric measurements, as we
discuss next in \cref{subsec:amplitude_damping_model}.

\subsection{Example: Amplitude damping}
\label{subsec:amplitude_damping_model}

Here we provide a toy model~\cite{Gambetta2007measurement} for soft
measurement noise based on {\em amplitude
  damping}~\cite{nielsenChuang} which is common in solid state quantum
devices such as superconducting qubits and quantum
dots~\cite{Blais2004cqed,Colless2013dotdispersive}---we call this
model {\em Gaussian soft noise with amplitude damping}. In this case
the probability density functions $\pdf^{(0)}$ and $\pdf^{(1)}$ can be
tuned by setting the relative duration of three timescales $\tau_M$,
$\T1m$ and $\tau_F$; see \cref{fig:simple-ad-intuition}.  We therefore
write these as $\pdfAD^{(0)}(\mu;\tau_M,\T1m,\tau_F)$ and
$\pdfAD^{(1)}(\mu;\tau_M,\T1m,\tau_F)$, making their dependence on
$\tau_M,\T1m,\tau_F$ explicit.

Under this noise model, at any
infinitesimal time interval the qubit may decay from $\ket{1}$ to
$\ket{0}$ with a fixed time-independent rate
$1/\T1m$, but no transitions from $\ket{0}$ can occur.  
Here we refer to $\T1m$ as the {\it amplitude damping time}, although in the physics
literature it is often referred to as the $T_1$ time.
This asymmetry
arises because in many systems the $\ket{1}$ state has higher energy
than the $\ket{0}$ state, and unintentional interactions with a
low-temperature bath lead to energy transfer from the qubit to the
bath, resulting in this decay (for this reason, amplitude damping is
also known as {\em energy relaxation}). 

Since measurement relies on the accumulation of information about the
state over the {\it measurement time} $\tau_M$, any qubit decays during that
time will increase the probability that a $\ket{1}$ at the beginning
of the measurement is mistaken for a $\ket{0}$, but not the other
way around. 

In our amplitude damping model model, there is a signal $S(t)$ which 
builds up over time $t$ starting from $S(0)=0$.  
The signal's behavior is mathematically
equivalent to a one-dimensional continuous random walk with drift, where the sign
of the drift depends on whether the system is in the state $\ket{0}$
or $\ket{1}$.  Specifically, if the state of the system is $\ket{0}$
during an infinitesimal time interval from $t$ to $t+\text{d}t$, the
mean of $S$ is increased by an amount proportional to $\text{d}t$.  On
the other hand if the state of the system is $\ket{1}$ during this
interval, the mean of $S$ is decreased by the same amount.
Irrespective of the state of the system, the variance of $S$ increases
by an amount proportional to $\text{d}t$ during this time interval.
The ratio between the variance and the square of the mean of the
increment over a small time interval is $\tau_F/\text{d}t$, where the
constant $\tau_F$ is refered to as {\em the fluctuation time}.  The
signal is read out at a time $\tau_M$ giving $S(\tau_M)$ which we take
to be the continuous measurement outcome $\mu$ (after scaling such
that the mean of $\mu$ is $+1$ when $\bar{\mu}=0$).

Although the derivation and precise forms of the conditional distributions 
under this toy model are somewhat involved (see \cref{app:toy-model-measurement}), their qualitative behavior is rather simple, as illustrated by \cref{fig:simple-ad-intuition}.
When $\tau_M\ll\T1m$, the distributions are well approximated by
Gaussians.  These Gaussians have an overlap that decreases
monotonically with $\tau_M / \tau_F$, so and increasing $\tau_M /
\tau_F$ lowers ${\mathbb P}(\text{soft flip}|\bar{\mu})$.  As $\tau_M$
approaches $\T1m$, $\pdfAD^{(0)}$ remains unchanged but $\pdfAD^{(1)}$
is distorted and shifts towards $\pdfAD^{(0)}$, so that as
$\frac{\tau_M}{\T1m}$ increases ${\mathbb P}(\text{soft flip}|1)$
approaches $\frac{1}{2}$ (the distributions overlap completely).  A
regime of particular interest in this model is when
$\tau_F\ll\tau_M\ll\T1m$, so that $\pdfAD^{(1)}$ is not significantly
distorted ($\tau_M\ll\T1m$), while the overlap between the conditional
distributions is also small ($\tau_F\ll\tau_M$).

\begin{figure}
  \begin{center}
  \begin{tikzpicture}
    \tikzstyle{flecha}=[draw=gray]
    \tikzstyle{guide}=[->,draw=black,>=latex,line width=0.15mm]

    \node[above, inner sep=0] (image11) at (-\textwidth/7,0) {
      \includegraphics[width=\textwidth/7]{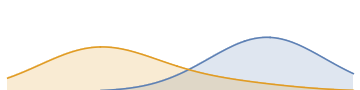}
    };
    \node[above, inner sep=0] (image12) at (0,0) {
        \includegraphics[width=\textwidth/7]{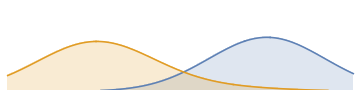}
    };
    \node[above, inner sep=0] (image13) at (\textwidth/7,0) {
        \includegraphics[width=\textwidth/7]{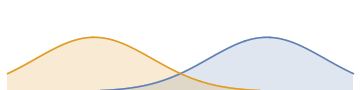}
    };

    \node[above, inner sep=0] (image21) at (-\textwidth/7,-1) {
      \includegraphics[width=\textwidth/7]{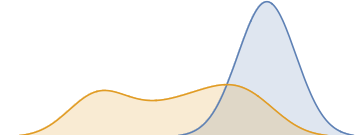}
    };
    \node[above, inner sep=0] (image22) at (0,-1) {
        \includegraphics[width=\textwidth/7]{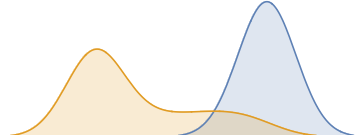}
    };
    \node[above, inner sep=0]  (image23) at (\textwidth/7,-1) {
        \includegraphics[width=\textwidth/7]{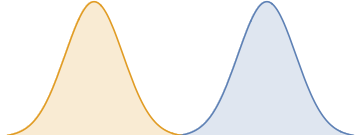}
    };

    \node[above, inner sep=0] (image31) at (-\textwidth/7,-2.8) {
      \includegraphics[width=\textwidth/7]{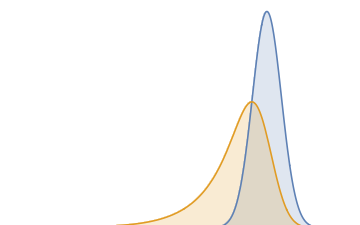}
    };
    \node[above, inner sep=0] (image32) at (0,-2.8) {
        \includegraphics[width=\textwidth/7]{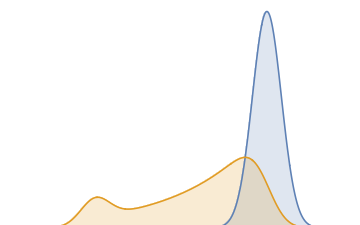}
    };
    \node[above, inner sep=0] (image33) at (\textwidth/7,-2.8) {
        \includegraphics[width=\textwidth/7]{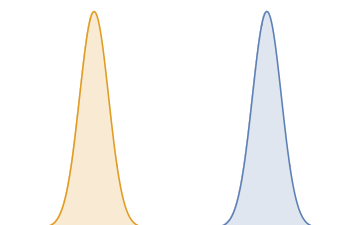}
    };

    \draw [flecha] (-\textwidth/7,0) -- ++ (0,0.1);
    \draw [flecha] (0,0) -- ++ (0,0.1);
    \draw [flecha] (\textwidth/7,0) -- ++ (0,0.1);

    \draw [flecha] (-\textwidth/7,0) -- ++ (-\textwidth/14.5,0);
    \draw [flecha] (-\textwidth/7,0) -- ++ (\textwidth/14.5,0);

    \draw [flecha] (0,0) -- ++ (-\textwidth/14.5,0);
    \draw [flecha] (0,0) -- ++ (\textwidth/14.5,0);

    \draw [flecha] (\textwidth/7,0) -- ++ (-\textwidth/14.5,0);
    \draw [flecha] (\textwidth/7,0) -- ++ (\textwidth/14.5,0);

    \draw [flecha] (-\textwidth/7,-1) -- ++ (0,0.1);
    \draw [flecha] (0,-1) -- ++ (0,0.1);
    \draw [flecha] (\textwidth/7,-1) -- ++ (0,0.1);

    \draw [flecha] (-\textwidth/7,-1) -- ++ (-\textwidth/14.5,0);
    \draw [flecha] (-\textwidth/7,-1) -- ++ (\textwidth/14.5,0);

    \draw [flecha] (0,-1) -- ++ (-\textwidth/14.5,0);
    \draw [flecha] (0,-1) -- ++ (\textwidth/14.5,0);

    \draw [flecha] (\textwidth/7,-1) -- ++ (-\textwidth/14.5,0);
    \draw [flecha] (\textwidth/7,-1) -- ++ (\textwidth/14.5,0);

    \draw [flecha] (-\textwidth/7,-2.8) -- ++ (0,0.1);
    \draw [flecha] (0,-2.8) -- ++ (0,0.1);
    \draw [flecha] (\textwidth/7,-2.8) -- ++ (0,0.1);

    \draw [flecha] (-\textwidth/7,-2.8) -- ++ (-\textwidth/14.5,0);
    \draw [flecha] (-\textwidth/7,-2.8) -- ++ (\textwidth/14.5,0);

    \draw [flecha] (0,-2.8) -- ++ (-\textwidth/14.5,0);
    \draw [flecha] (0,-2.8) -- ++ (\textwidth/14.5,0);

    \draw [flecha] (\textwidth/7,-2.8) -- ++ (-\textwidth/14.5,0);
    \draw [flecha] (\textwidth/7,-2.8) -- ++ (\textwidth/14.5,0);

    \draw [guide] (-\textwidth/6,1) -- ++ (\textwidth/3,0);

    \node[above,font=\small] at (0,1) {increasing $\tau_A$};

    \draw [guide] (-2.75\textwidth/12,0.5) -- ++ (0,-3);

    \node[left,font=\small,rotate=90] at (-3\textwidth/12,0.2) {increasing $\tau_M$};

  \end{tikzpicture}
  \end{center}
  \caption{Conditional probability density functions $\pdfAD^{(0)}$
    (blue) and $\pdfAD^{(1)}$ (yellow) for $\tau_M/\tau_F$ increasing from
    top to bottom, and $\T1m/\tau_F$ increasing from left to right. The width of
    $\pdf^{(0)}$ is proportional to $\sqrt{\tau_F/\tau_M}$, and the scales are
    chosen so that for $\tau_M\ll\tau_A$ the Gaussians are
    centered at $\pm1$. Intuitively, increasing measurement time
    $\tau_M$ narrows the conditional distributions (making
    them more distinguishable), while decreasing amplitude damping
    lifetime $\tau_A$ shifts the $\bar\mu=1$ distribution towards the
    $\bar\mu=0$ distribution (making them less distinguishable).
  \label{fig:simple-ad-intuition}}
\end{figure}

\section{Graphical models for soft noise in the surface code}
\label{sec:graphical_model}

In this section, we propose a general graphical noise model to describe errors and soft measurement outcomes for surface codes, including correlated errors and repeated measurements.
This formalism includes the standard noise models mentioned in \cref{subsec:surface_codes} namely ideal measurement noise, phenomenological noise and circuit noise~\cite{dennis2002tqm} as special cases. 
Later in \cref{sec:soft_decoding} we define decoders which correct the errors in graphical noise models, and in particular which make use of the soft data to outperform standard decoders which ignore this soft information.

\subsection{General graphical model}
\label{subsec:graphical_model}

Here we present a general graphical model for arbitrary Pauli noise with soft measurement outcomes. 
Faults can affect the outcomes of plaquette measurements of the surface code and can leave residual errors on the data qubits.
The following definition captures this notion.

\begin{defi} \label{def:graphical_model_hard}
A {\em graphical model} for $T$ rounds of measurements with the surface code is a quadruple $(G_T, p, f, \pi)$ defined by
\begin{itemize}
\item
{\bf Fault hypergraph:} 
A hypergraph $G_T = (V_T, E_T)$ such that $V_T$ contains a vertex $(a, t)$ for each plaquette $a$ and for each round $t = 1, \dots, T$.
\item 
{\bf Fault probability:}
A value $p_e \in [0, 1]$ for each hyperedge $e \in E_T$.
\item
{\bf Measurement noise:}
A pair of probability density functions $(\pdf^{(0)}_{a, t}, \pdf^{(1)}_{a, t})$ for each plaquette $a$ and for each round $t = 1, \dots, T$ describing the measurement outcome in this location.
\item 
{\bf Residual error:}
A Pauli error $\pi_e$ on the data qubits of the surface code for each hyperedge $e \in E_T$.
\end{itemize}
\end{defi}

We call the graph $G_T$ the \emph{fault hypergraph} or simply fault graph when it is a proper graph.
To construct the graph $G_T$ in this case, we begin by including all the vertices from $G_X$ and $G_Z$ described in \cref{subsec:surface_codes} for each of the $T+1$ layers. 
The set of {\em measurement vertices} of $G_T$, denoted $\Vb$, is the set of vertices $(a, t)$ corresponding to the spacetime locations of the plaquettes, where $a$ corresponds to the plaquette and $t$ labels the measurement round.
Each hyperedge of $G_T$ represents a potential fault and the vertices contained in a hyperedge correspond to the spacetime locations of measurements that detect this fault.
In practice, a fault may correspond to a Pauli error $X, Y$ or $Z$ or the flip of an outcome bit. 
It could also be a combination of multiple Pauli errors and outcome bit flips.
The fault $e$ induces a {\em residual error} $\pi_e$ on the data qubits at the end of the $T$ rounds of measurement. 
The hypergraph $G_T$ may contain multiple copies of the same edge $e$ with different residual errors $\pi_e$.
The graph $G_T$ contains additional vertices which we call {\em boundary vertices}, denoted $\Vw = V \backslash \Vb$.
\cref{fig:hyperedges} shows an example of a graphical noise model which corresponds to $T=3$ rounds of measurement for the distance $d=5$ surface code.

\begin{figure}
\centering
\includegraphics[width=0.37\textwidth]{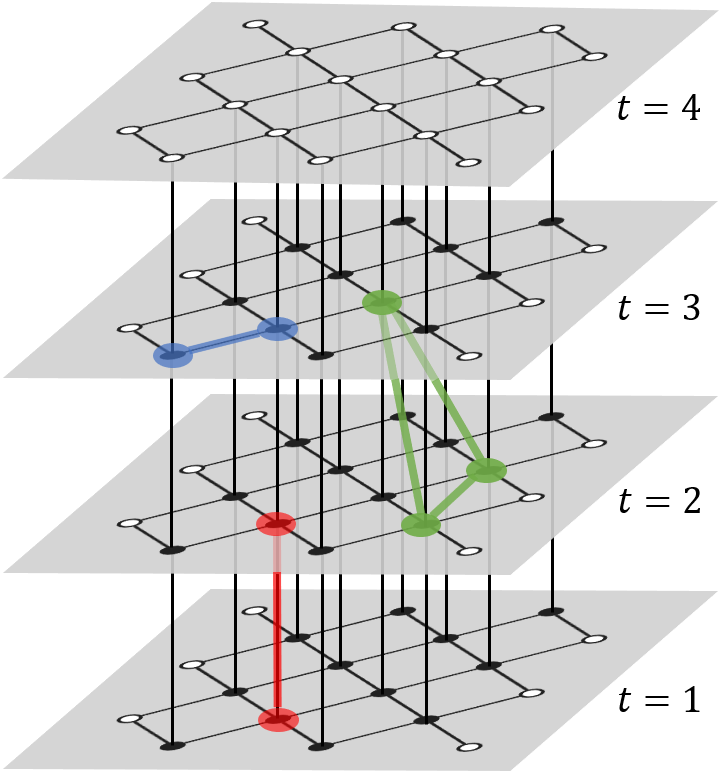}
\caption{
Examples of hyperedges corresponding to faults. 
A single-qubit $X$ error corresponds to a singe horizontal edge (blue).
A flipped measurement outcome corresponds to a singe vertical edge (red).
More general faults are possible, and correspond to hyperedges. 
Here we show an example of a weight-two Pauli error and the resulting syndrome that could arise from a single fault in a stabilizer measurement circuit (green).
}
\label{fig:hyperedges}
\end{figure}

In some cases, it is useful to define independent noise models for $X$ type errors and $Z$ type errors in which case we define a separate hypergraph for each.
The $X$-type hypergraph contains vertices $(a, t)$ for $Z$ plaquettes only.

\subsection{Sampling from graphical models}
\label{subsec:syndrome}

In our graphical noise models, faults occurs independently of one another.
Correlated errors acting on different qubits at potentially different times can occur since each fault has no restriction on the support of the residual error or on the measurement vertices which it flips.

A {\em fault set} for the graphical model $G_T$ is defined to be a 1-chain $x$ of $G_T$, that is a formal sum of hyperedges $x = \sum_{e \in E_T} x_e e$.
The {\em probability of a fault set} $x$ is
\begin{align} \label{eq:def_error_proba}
\Prob(x) = \prod_{e \in x} p_e \prod_{e \notin x} (1 - p_e),
\end{align}
where each fault $e$ occurs with probability $p_e$.

For a trivial fault set, $x = 0$, the {\em ideal outcome} in any measurement vertex $(a, t)$ is $\idealoutcome_{a, t} = 0$.
A fault $e$, containing a measurement vertex $(a, t)$, induces a change of the value of the ideal outcome $\idealoutcome_{a, t'}$ for all $t' \geq t$.
The ideal outcome associated with a general fault set $x$ is obtained by combining the effects of all its faults $e \in x$.

The soft outcome $\softoutcome_{a, t}$ and the hard outcome $\hardoutcome_{a, t}$ are generated from the ideal outcome using the probability density functions $\pdf^{(0)}_{a, t}$ and $\pdf^{(1)}_{a, t}$ as explained in \cref{subsec:measurement}.
Note that although we use the term `ideal outcome', this outcome depends on faults that have occurred at earlier times.

For the decoders we define later it is also convenient to introduce the syndrome which measures the changes between two consecutive rounds of measurements.
The {\em syndrome}, denoted $\hat s(\softoutcome)$, is a 0-chain in $G_T$ which can be calculated from the observed soft measurement outcomes $\softoutcome$ by first using \cref{eq:def_hard_outcome} to identify the hard outcomes $\hardoutcome$, and then
\begin{align} \label{eq:syndrome_def}
\hat s_{a, t} = \hardoutcome_{a, t} + \hardoutcome_{a, t-1} \pmod 2
\end{align}
for all measurement vertices $(a, t)$, with the convention $\hardoutcome_{a, 0} = 0$.
The value of the syndrome in boundary vertices is 0.
The syndrome provides the same information as the hard outcome but it is more convenient for the decoder to use the syndrome as an input instead of the hard outcome.

\subsection{Example: Soft phenomenological noise}
\label{subsec:phenom_model}

Here we define a generalization of the standard phenomenological noise model introduced in~\cite{dennis2002tqm} to include soft Gaussian noise affecting measurement outcomes. 

First we define the noise model's $X$-type \emph{fault graph} $G_T$, by stacking $T+1$ copies of the graph $G_X$ which was described in \cref{subsec:surface_codes} and represented in \cref{fig:rotated_code}(b), and connecting consecutive layers with vertical edges; see \cref{fig:phenom_graph}.
The graph $G_T$ has two types of vertices.
Specifically, the set of {\em measurement vertices} of $G_T$, denoted $\Vb$, is the set of vertices $(a, t)$ corresponding to the space time locations of the plaquettes, where $a$ corresponds to a plaquette and $t=1, \dots, T$ labels the measurement round. The remaining vertices are {\em boundary vertices}, denoted $\Vw = V \backslash \Vb$.
The vertices of the $T+1$ layer are treated differently from other vertices in $G_T$, in that they correspond to perfect measurements.

\begin{figure}
\centering
\includegraphics[width=0.37\textwidth]{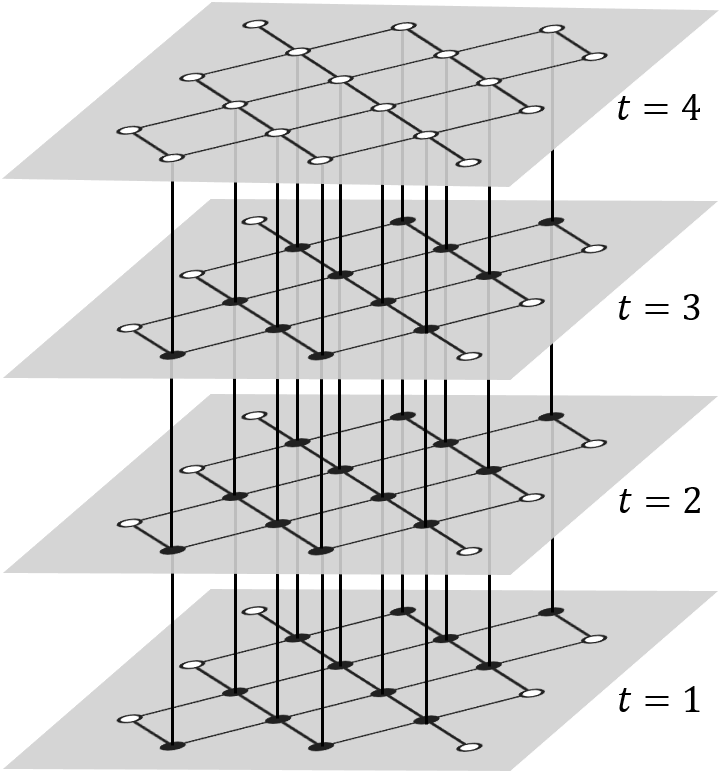}
\caption{
Fault graph for the soft phenomenological noise model with $T=3$ rounds of measurements for the surface code with distance five. 
The graph contains $T+1=4$ copies of the graph represented in \cref{fig:rotated_code}(b).
There is a copy of the graph for each time step and an additional copy corresponding to a round of perfect measurement is added for a proper termination. 
Horizontal edges correspond to qubit errors and vertical edges encode ideal outcome flips.
}
\label{fig:phenom_graph}
\end{figure}

Furthermore, a probability $p_D$ is assigned to each horizontal edge, and a probability $p_M$ is assigned to each vertical edge in $G_T$.
This corresponds to qubits being affected by independent $X$ errors with probability $p_D$ and independent ideal measurement outcome flips with probability $p_M$.
The pair of probability density functions $\pdf^{(0)}$ and $\pdf^{(1)}$ are assigned to each measurement vertex in $G_T$.

In this graphical description of the noise model, edges represent potential faults and the vertices contained in an edge correspond to the measurements whose ideal value will \emph{change} due to the presence of this fault.
For instance, the fault corresponding to the horizontal edge $\{(a,t),(b,t)\}$ in $G_T$ is an $X$ error being applied just before round $t$ of measurements to the qubit associated with the edge $(a,b)$ in $G_X$.
This fault occurs with the probability assigned to the edge, namely $p_D$, and it results in a change of the ideal outcome associated with the plaquettes $a$ and $b$ at time $t$.
In other words, it results in a change of the ideal outcomes of vertices $(a,t)$ and $(b,t)$ when compared with $(a,t-1)$ and $(b,t-1)$ respectively. 
In the absence of additional faults, these vertices then retain their new ideal outcome for all rounds $t' \geq t$.
The fault corresponding to the vertical edge $\{(a,t),(b,t+1)\}$ in $G_T$ is a flip of the ideal outcome of the plaquette associated with $a$ in $G_X$ only at round $t$.
This fault occurs with probability $p_M$, and corresponds to a change in the ideal outcome of the vertex $(a,t)$ with respect to the vertex $(a,t-1)$ and then a further change of the ideal outcome of the vertex $(a,t+1)$ with respect to the vertex $(a,t)$. 

If the ideal outcomes were reported, this would precisely reproduce the standard phenomenological noise model introduced in Ref.~\cite{dennis2002tqm}. 
Instead, the reported output of each plaquette in each round is the soft outcome obtained by sampling from the probability density functions assigned to the vertices of $G_T$, namely $\pdf^{(0)}$ for plaquettes with ideal outcome $0$ and $\pdf^{(1)}$ for plaquettes with ideal outcome $1$.

\subsection{Example: Soft circuit noise}
\label{subsec:circuit_model}

Here we provide a generalization of the standard circuit noise model to include soft measurement. 
Our model involves $T$ consecutive stabilizer measurement rounds, where 
each measurement round is implemented by the circuit shown in \cref{fig:rotated_sc_measurement_order}.

\begin{figure}
    \center{
        \begin{tikzpicture}[
    scale=2,
    CNOT/.pic = {
        \begin{scope}[xshift=0.28cm]
            \draw[line width = 0.6] (-0.2,0) -- (1,0);
            \draw[line width = 0.6] (0,0) circle (0.2);
            \draw[line width = 0.6] (0,-0.2) -- (0,0.2);
            \draw[fill = black] (1,0) circle (0.08);
        \end{scope}
    },
    CZ/.pic = {
        \begin{scope}[xshift=0.28cm]
            \draw[line width = 0.6] (-0.2,0) -- (1,0);
            \draw[line width = 0.2, fill = white] (-0.2,-0.2) rectangle (0.2, 0.2) (0,0) node {\rotatebox{\numexpr #1 + 90 \relax}{\tiny $Z$}};
            \draw[fill = black] (1,0) circle (0.08);
        \end{scope}
    }
]
\draw[fill=gray!60] (0,0) -- (1,0) -- (1,1) -- (0,1) -- cycle;
\draw[fill=gray!60] (1,1) -- (2,1) -- (2,2) -- (1,2) -- cycle;
\draw[fill=gray!60] (1,0) -- (2,0) arc(0:-180:0.5) -- cycle;
\draw[fill=gray!60] (0,2) -- (1,2) arc(0:+180:0.5) -- cycle;

\draw[fill=gray!20] (1,0) -- (2,0) -- (2,1) -- (1,1) -- cycle;
\draw[fill=gray!20] (0,1) -- (1,1) -- (1,2) -- (0,2) -- cycle;
\draw[fill=gray!20] (0,0) -- (0,1) arc(90:270:0.5) -- cycle;
\draw[fill=gray!20] (2,1) -- (2,2) arc(90:-90:0.5) -- cycle;

\draw[fill=gray] (0,0) node(V00){} circle (0.1);
\draw[fill=gray] (1,0) node(V10){} circle (0.1);
\draw[fill=gray] (2,0) node(V20){} circle (0.1);
\draw[fill=gray] (0,1) node(V01){} circle (0.1);
\draw[fill=gray] (1,1) node(V11){} circle (0.1);
\draw[fill=gray] (2,1) node(V21){} circle (0.1);
\draw[fill=gray] (0,2) node(V02){} circle (0.1);
\draw[fill=gray] (1,2) node(V12){} circle (0.1);
\draw[fill=gray] (2,2) node(V22){} circle (0.1);

\draw (V00.north east) pic[scale=0.8, rotate= 45] {CNOT} +( 0.23, 0.07) node {\small $3$};
\draw (V01.south east) pic[scale=0.8, rotate=315] {CNOT} +( 0.23,-0.07) node {\small $1$};
\draw (V10.north west) pic[scale=0.8, rotate=135] {CNOT} +(-0.23, 0.07) node {\small $4$};
\draw (V11.south west) pic[scale=0.8, rotate=225] {CNOT} +(-0.23,-0.07) node {\small $2$};

\draw (V11.north east) pic[scale=0.8, rotate= 45] {CNOT} +( 0.23, 0.07) node {\small $3$};
\draw (V12.south east) pic[scale=0.8, rotate=315] {CNOT} +( 0.23,-0.07) node {\small $1$};
\draw (V21.north west) pic[scale=0.8, rotate=135] {CNOT} +(-0.23, 0.07) node {\small $4$};
\draw (V22.south west) pic[scale=0.8, rotate=225] {CNOT} +(-0.23,-0.07) node {\small $2$};

\draw (V02.north east) pic[scale=0.8, rotate= 45] {CNOT} +( 0.23, 0.07) node {\small $3$};
\draw (V12.north west) pic[scale=0.8, rotate=135] {CNOT} +(-0.23, 0.07) node {\small $4$};

\draw (V10.south east) pic[scale=0.8, rotate=315] {CNOT} +( 0.23,-0.07) node {\small $1$};
\draw (V20.south west) pic[scale=0.8, rotate=225] {CNOT} +(-0.23,-0.07) node {\small $2$};

\draw (V01.north east) pic[scale=0.8, rotate= 45] {CZ={ 45}} +( 0.23, 0.07) node {\small $2$};
\draw (V11.north west) pic[scale=0.8, rotate=135] {CZ={135}} +(-0.23, 0.07) node {\small $4$};
\draw (V02.south east) pic[scale=0.8, rotate=315] {CZ={315}} +( 0.23,-0.07) node {\small $1$};
\draw (V12.south west) pic[scale=0.8, rotate=225] {CZ={225}} +(-0.23,-0.07) node {\small $3$};

\draw (V10.north east) pic[scale=0.8, rotate= 45] {CZ={ 45}} +( 0.23, 0.07) node {\small $2$};
\draw (V20.north west) pic[scale=0.8, rotate=135] {CZ={135}} +(-0.23, 0.07) node {\small $4$};
\draw (V11.south east) pic[scale=0.8, rotate=315] {CZ={315}} +( 0.23,-0.07) node {\small $1$};
\draw (V21.south west) pic[scale=0.8, rotate=225] {CZ={225}} +(-0.23,-0.07) node {\small $3$};

\draw (V00.north west) pic[scale=0.8, rotate=135] {CZ={135}} +(-0.07, 0.27) node {\small $4$};
\draw (V01.south west) pic[scale=0.8, rotate=225] {CZ={225}} +(-0.07,-0.27) node {\small $3$};

\draw (V21.north east) pic[scale=0.8, rotate= 45] {CZ={ 45}} +( 0.07, 0.27) node {\small $2$};
\draw (V22.south east) pic[scale=0.8, rotate=315] {CZ={315}} +( 0.07,-0.27) node {\small $1$};

\end{tikzpicture}
    }
    \caption{\label{fig:rotated_sc_measurement_order}
        Standard stabilizer extraction circuit for a distance 3 rotated surface code (see for instance \cite{tomita2014low}).
        The circuit for an arbitrary distance code is obtained by translating this circuit for all stabilizers.
        Note that there are some idle circuit locations during which CNOT gates are being applied and others during which measurement qubits are being applied, which we treat separately in our model since measurement and gate times can be different.
    }
\end{figure}

Each of the components of the circuit can fail in specific ways, and for each of these possible failures we introduce a hyperedge to the graph $G_T$. 
In particular, the following failures can occur for the circuit components:
\begin{itemize}
	\item Idle qubits waiting for CNOT gates: $X$, $Y$, $Z$ each occurs independently with probability $p_{IG}/3$. 
	\item Idle qubits waiting for measurements: $X$, $Y$, $Z$ each occurs independently with probability $p_{IM}/3$. 
	\item CNOT gates: $XI$, $YI$, $ZI$, $IX$, $XX$, $YX$, $ZX$, $IY$, $XY$, $YY$, $ZY$, $IZ$, $XZ$, $YZ$, $ZZ$ each occurs independently with probability $p_{\text{CNOT}}/15$.
	\item Measurement outcomes (ideal): have their ideal outcome flipped with probability $p_M$. 
	\item Measurement outcomes (soft): given the ideal outcome $\bar{\mu} \in \{0,1\}$, the continuous outcome $\mu$ is sampled from the probability density function $\pdf^{(\bar{\mu})}(\mu)$ as described in \cref{subsec:measurement}.
\end{itemize}
A few remarks are called for before describing how the graphical noise model is constructed. 
First note that the circuit in \cref{fig:rotated_sc_measurement_order} only has time steps during which either CNOTs or measurements are performed (but not both), making it natural to consider the two above scenarios for idle qubits since a CNOT gate may have a very different duration than a measurement. 
Secondly, here each fault (even faults associated with the same operation in the circuit) are applied independently.
At first this may seem slightly different from the standard circuit noise considered in the literature in different faults that occur on the same operation are exclusive.  
However, the standard exclusive circuit noise model and the inclusive circuit noise model assumed here are actually exactly equivalent as proven in Appendix~E of~\cite{chao2020optimization}. 
When mapping between the inclusive and exclusive noise models, the probability with which each fault occurs can change, although the change is very small in the regimes of interest. 
We discuss this in more detail in \cref{app:inclusive_exclusive}.

To construct the graph $G_T$ in this case, we begin by including all the vertices from $G_X$ and $G_Z$ for $T+1$ layers from \cref{subsec:surface_codes}. 
For each of the possible faults listed in the first four bullet points (i.e. excluding the soft measurement) that can occur in the $T$ rounds of stabilizer measurements implemented by the circuit, a hyperedge is added to the graph $G_T$.
The vertices included in the hyperedge are those vertices whose measurement outcomes will flip if that fault occurs.
The probability assigned to the hyperedge is simply the probability that the fault occurs.
The residual error associated with the hyperedge is identified by observing what Pauli operator is left on the data qubits at the end of the measurement round during which the fault occurs.
Lastly, each measurement vertex has the pair $\{ \pdf^{(0)}(\mu), \pdf^{(1)}(\mu) \}$ of probability density functions assigned.

\section{Soft decoding}
\label{sec:soft_decoding}

In this section, we propose two efficient decoders for surface codes with soft measurements.
First, we propose a soft version of the minimum weight perfect matching (MWPM) decoder~\cite{dennis2002tqm}.
Different variants of the MWPM decoder exploiting some soft information were considered previously for the special case of GKP surface codes~\cite{vuillotquantum, noh2020fault, noh2021low, chamberland2020building}.
Although its complexity is polynomial, the soft MWPM decoder may be too slow for some applications.
Next, we propose a soft version of the Union-Find (UF) decoder~\cite{delfosse2017UF} which has lower complexity and therefore could be more practical.

\subsection{Decodability constraints}
\label{subsec:dec_constraints}

The ultimate goal of a decoder is to correct the data qubits, that is to cancel the effect of the residual errors of the faults which occur.
The two decoders proposed in this article return a fault set $x$ for a given set of soft outcomes.
The correction to apply to the data qubits is the product of the residual errors $\pi_e$ corresponding to the faults $e$ in $x$.

We will restrict ourselves to graphical models $(G_T, p, f, \pi)$ that satisfy the following {\em decodability conditions}:

\medskip
(C1) All the edges in $G_T$ have rank two.

\medskip
(C2) For all $e \in E_T$, we have $p_e < 0.5$.

\medskip
Condition (C1) guarantees that $G_T$ is a graph instead of a general hypergraph.
We require this because we expect the decoding problem for a general hypergraph to be too difficult to solve.
For example, the problem of identifying a least likely fault set in that case includes the three-dimensional matching problem which is NP-hard~\cite{garey1979NP}.
Condition (C2) excludes pathological cases where faults occur with excessively high probability.

\subsection{Soft decoding graph}
\label{subsec:decoding_graph}

In \cref{subsec:graphical_model} we defined the fault hypergraph. 
Here, we introduce another graph which will be used for decoding.
The {\em decoding graph}, denoted $\tilde G_T = (\tilde V_T, \tilde E_T)$, associated with the graphical model $(G_T, q, f, \pi)$ is the graph obtained from $G_T$ by adding an edge (which we call a soft vertical edge) connecting $(a, t)$ with $(a, t+1)$ when $(a, t)$ is a measurement vertex.
The graph $G_T$ may already contain a vertical edge connecting these two vertices.
In that case, the decoding graph $\tilde G_T$ contains two edges connecting $(a, t)$ and $(a, t+1)$ that we refer to as the soft vertical edge and the hard vertical edge.
Moreover, all the boundary vertices of $\tilde G_T$ are identified to single vertex, denoted $v_g$, that we refer to as the {\em ghost vertex}.
By definition, we have $\tilde V_T = \Vb \cup \{v_g\}$ and $\tilde E_T$ is the set $E_T$ augmented with soft vertical edges.

Consider a measurement vertex $(a, t)$. let $\mu$ be the soft outcome observed in this vertex and $\hat \mu$ be the corresponding hard outcome. 
Denote by $\hat \mu' = \hat \mu + 1 \pmod 2$ the other value of the hard outcome.
To define edge weights in the decoding graph, it is convenient to introduce the {\em likelihood ratio} 
\begin{align} \label{eq:def_likelihood_ratio}
L_{a, t}(\mu)
=
\frac{\pdf^{(\hat \mu')}_{a, t}(\mu)}{\pdf^{(\hat \mu)}_{a, t}(\mu)} \cdot
\end{align}
This ratio can be computed from the knowledge of the probability density functions and the value of the soft outcome $\mu$ because $\hat \mu$ and $\hat \mu'$ are derived from $\mu$.
By definition of the hard outcome, we have $L_{a, t}(\mu) \in [0, 1]$.

Given the model parameters $p_e$ and the set of soft outcomes $\softoutcome$ observed over measurement vertices, we define edge weights for the decoding graph.
Hard and soft edges have different weights.
The weight of the soft edge $e$ between $(a, t)$ and $(a, t+1)$ is defined as a function of the soft outcome, whereas the hard edge weight depends on the parameter $p_e$.
Precisely, the edge weight is defined by
\begin{align} \label{eq:def_edge_weight}
w(e) = 
\begin{cases}
- \log L_{a, t}(m_{a, t}) & \text{if $e$ is soft,}\\
- \log(p_e /(1-p_e)) & \text{otherwise.}
\end{cases}
\end{align}
These edge weights are non-negative numbers.
In some cases, we denote this weight by $w_{\tilde G_T}(e)$ to avoid any ambiguity about which graph the weight is defined with respect to.

\subsection{Soft Minimum Weight Perfect Matching decoder}
\label{subsec:soft_mwpm}

Our soft MWPM decoder is specified in \cref{algo:soft_mwpm}.
The key ingredient is the distance graph that we introduce below.

\begin{algorithm}
\caption{Soft MWPM decoder}
\label{algo:soft_mwpm}
\begin{algorithmic}[1]
\REQUIRE The decoding graph $\tilde G_T$ for $T$ rounds of measurements. A set of soft outcomes $\softoutcome$.
\ENSURE A fault set, {\em i.e.} a 1-chain $x$ in $G_T$.
\STATE For each soft vertical edge $e$, compute $w_{\tilde G_T}(e)$ as a function of $\softoutcome$ using \cref{eq:def_edge_weight}.
\STATE Compute the syndrome $\hat s$ from $\softoutcome$ using \cref{eq:syndrome_def}. Let $v_1, \dots, v_k$ be the set of vertices of $\tilde G_T$ with non-trivial syndrome.
\STATE Construct the distance graph $K(\hat s)$ with vertex set $V_K = \{v_1, \dots, v_k\}$ and compute the edge weights $w_K(\{v_i, v_j\}) = d_{\tilde G_T}(v_i, v_j)$ using Dijkstra's algorithm.
\STATE Compute a minimum weight perfect matching $M$ in $K(\hat s)$.
\STATE For each edge $\{u, v\} \in M$, compute a geodesic $\geo(u, v)$ in the graph $\tilde G_T$.
\RETURN the 1-chain $\tilde x = \sum_{\{u, v\} \in M} \geo(u, v)$ restricted to the graph $G_T$.
\end{algorithmic}
\end{algorithm}

We consider the distance in the decoding graph $\tilde G_T$ induced by the edge weights $w_{\tilde G_T}(e)$.
The length of a path in $\tilde G_T$ is defined to be the sum of the weights of its edges.
The {\em distance} between two vertices $u$ and $v$, denoted $d_{\tilde G_T}(u, v)$, is defined to be the minimum length of a path connecting $u$ and $v$.
A minimum length path joining $u$ and $v$ is called a {\em geodesic}.
Given $u, v \in V_T$, let $\geo(u, v)$ be the set of edges supporting a geodesic connecting $u$ and $v$. 
If the decoding graph contains multiple geodesics between $u$ and $v$, $\geo(u, v)$ is chosen arbitrarily among them.
We will consider $\geo(u, v)$ as a 1-chain in the decoding graph.

Given the decoding graph $\tilde G_T$ and a syndrome $\hat s$, the {\em distance graph} $K(\hat s) = (V_K, E_K)$ is defined to be the fully connected graph with vertex set $V_K$ given by the set of measurement vertices supporting a non-trivial syndrome.
If $K(\hat s)$ contains an odd number of vertices, we add the ghost vertex to $V_K$.
We define edge weights in $K$ based on the distance in the decoding graph, that is $w_K(\{u, v\}) = d_{\tilde G_T}(u, v)$.

Like in the standard MWPM decoder, the basic idea of \cref{algo:soft_mwpm} is to compute a minimum set of paths that connect pairs of syndrome vertices in the decoding graph.
This is done using a Minimum Weight Perfect Matching (MWPM) algorithm~\cite{edmonds1965mwpm}.
The distance graph contains an even number of vertices and its edge weights are non-negative making the application of the MWPM algorithm straightforward.

The main difference between our soft MWPM decoder and the standard hard MWPM decoder~\cite{dennis2002tqm} is that the weights in the soft decoding graph depend on the set of outcomes $\softoutcome$ whereas they are fixed in the hard case.
As a result, the distances between all pairs of nodes can be precomputed in the standard MWPM but not in the soft MWPM decoder. 
Similarly, one cannot precompute a geodesic for each pair of nodes in the soft decoding graph.
This increases the execution time of the decoder.
However, as we explain below, the worst-case asymptotic complexity is no worse than the standard MWPM decoder.

The worst case complexity of the standard MWPM decoder is dominated by the call of the MWPM algorithm. 
Based on Blossom~V, a popular implementation of a MWPM algorithm designed by Kolmogorov~\cite{kolmogorov2009blossom} which has a worst case complexity in $O(n_v^3 n_e)$ for a graph with $n_v$ vertices and $n_e$ edges, one can achieve a worst case complexity in $O(|V_T|^5)$ for the hard MWPM decoder, that is $O(d^{15})$ if $T=d$ rounds of measurement are used.
Other implementations of the MWPM decoder achieve a better worst-case complexity but they might not run faster in practice (see Table~I in~\cite{cook1999blossom}).
The most favorable asymptotic scaling for a MWPM algorithm~\cite{vazirani1994mwpm} leads a worst-case complexity in $O(\sqrt{n_v} n_e)$ for the matching subroutine and a MWPM decoder complexity in $O(d^{7.5})$ for $T=d$.

\medskip
If Blossom~V is used, the complexity of our soft variant of the MWPM decoder remains the same, $O(d^{15})$, dominated by the cost of this MWPM subroutine.
We can also achieve $O(d^{7.5})$ worst-case complexity using the MWPM algorithm of~\cite{vazirani1994mwpm} despite the fact that distances and geodesic between the syndrome nodes must be computed on the fly in our soft MWPM decoder.
The computation of all the distances $d_{\tilde G_T}(v_i, v_j)$ is performed in step 3 and 4 of the algorithm with worst-case complexity $O(|V_T||E_T| + |V_T|^2 \log |V_T|)$ using Dijkstra's algorithm. Assuming $T=d$ measurement rounds, this subroutine runs in $O(d^6 \log d)$.
The geodesics are computed in step 7 of \cref{algo:soft_mwpm} also thanks to Dijkstra's algorithm with the same worst-case complexity as the distances calculation.

\subsection{Soft Union-Find decoder}
\label{sec:soft_unionfind}

Here we provide a soft version of the UF decoder~\cite{delfosse2017UF, Huang2020UF}, which achieves an error correction performance similar to the MWPM decoder but with better time complexity.
Our soft UF decoder is specified in \cref{algo:soft_uf}.

\begin{figure}
\centering
\includegraphics[scale=.4]{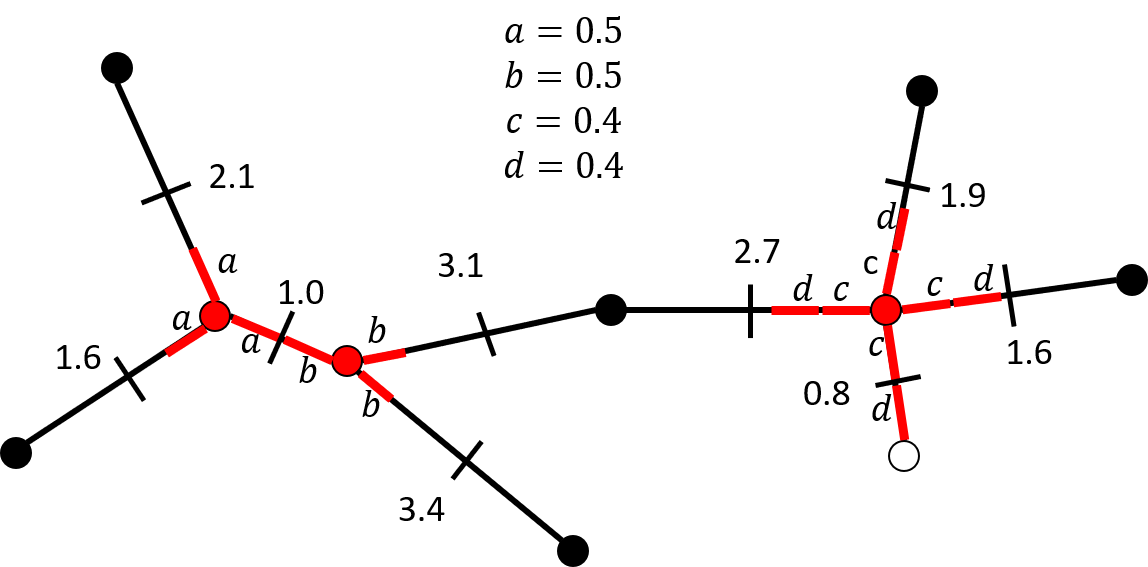}
\caption{
Illustration of the cluster growth subroutine of the UF decoder.
Edges of the decoding graph are split in the middle.
The three red vertices support a non-trivial syndrome.
Initially, there are three odd clusters, one for each syndrome vertex. 
One odd cluster has perimeter four and the other two have perimeter three.
At the first growth step, we select a cluster with minimum perimeter and we grow the three edges $(a)$ by 0.5.
Then, the edges $(b)$, which belong to the least recently grown cluster with perimeter three, are grown.
This connects two odd clusters forming an even cluster.
During the next two steps ($(c)$ and $(d)$), the remaining odd cluster grows until it meets a boundary node and becomes even.
}
\label{fig:uf_growth}
\end{figure}

The soft UF decoder works in two steps:
First, we grow clusters of qubits in the decoding graph until each of the clusters can be corrected separately.
The growth subroutine is shown in \cref{fig:uf_growth}.
Then, the peeling decoder~\cite{delfosse2020peeling} is used to find a correction inside the grown clusters in linear time.
In what follows, we explain the growth procedure used in our soft UF decoder.

To describe our soft UF decoder, it is convenient to introduce the {\em split-edge} graph $\tilde H_T$ obtained from the decoding graph $\tilde G_T$ by adding a vertex in the middle of each edge.
Each edge $e$ of the decoding graph splits into two edges of $\tilde H_T$ whose weights are given by $w_{\tilde G_T}(e)/2$.
The boundary vertices of $\tilde H_T$ are those corresponding to boundary vertices of $\tilde G_T$.

To keep track of the edge growth, we associate a {\em growth state} $\gamma(e) \in \R$ with each edge $e$ of $\tilde H_T$.
An edge $e$ is said to be {\em fully grown} if $\gamma(e) = w_{\tilde H_T}(e)$.
The {\em cluster} of a vertex $v$ of $\tilde H_T$ is defined to be the set of vertices of $\tilde H_T$ that can be reached from $v$ through a path of fully grown edges.

Given a syndrome $\hat s$, a cluster of $\tilde H_T$ is said to be {\em even} if it contains an even number of vertices of $\hat s$ or at least one boundary vertex.
A cluster that is not even is said to be {\em odd}.
If a cluster is even, we can use the peeling decoder to determine a fault set included in the cluster which cancels the syndrome of this cluster~\cite{delfosse2020peeling}.
Therefore, we will grow odd clusters until all clusters are even and can be corrected by peeling.

To determine which cluster to grow first, we consider the set $B(C)$ of edges of $\tilde H_T$ connecting a cluster $C$ and its complementary.
The {\em perimeter} of the cluster $C$ is defined as the number of edges in $B(C)$.
At the beginning of the growth procedure, the growth state of each edge is initialized to $\gamma(e) = 0$.
Then, at each growth step, we select an odd cluster $C$ with minimum perimeter and grow all edges of $B(C)$ by incrementing the growth states by
\begin{align} \label{eq:Gamma_C}
\Gamma_C = \min_{e \in B(C)}(w_{\tilde H_T}(e) - \gamma(e)).
\end{align}
$\Gamma_C$ is the smallest value $\Gamma$ such that growing all edges of $B(C)$ by $\Gamma$ fills at least one edge.
If multiple odd clusters have the same perimeter, we prioritize the least recently grown cluster.

After the growth, we are left with only even clusters, and we will call the peeling decoder to find an estimation of the fault set inside the grown clusters.
We consider the set $\varepsilon \subset \tilde E_T$ of edges of the decoding graph $\tilde G_T$ for which both halves are fully grown in the split-edge graph $\tilde H_T$.
The peeling decoder takes as input the subset $\varepsilon \subset \tilde E_T$ and a syndrome $\hat s$ and computes, in linear time, a 1-chain $\tilde x\subset \varepsilon$ in $\tilde G_T$ such that $\partial_{\Vb}(\tilde x) = \hat s$.
We will use the notation $\tilde x = \textbf{Peeling}(\varepsilon, \hat s)$.
The restriction of $\tilde x$ to $G_T$ is the output the soft UF decoder.

\begin{algorithm}
\caption{Soft UF decoder}
\label{algo:soft_uf}
\begin{algorithmic}[1]
\REQUIRE The graph $\tilde H_T$ for $T$ rounds of measurements. 
A set of soft outcomes $\softoutcome$.
\ENSURE A 1-chain $x$ in $G_T$.
\STATE For each soft vertical edge $e$, compute $w_{\tilde H_T}(e)$ as a function of $\softoutcome$ using \cref{eq:def_edge_weight}.
\STATE Compute the syndrome $\hat s$ from $\softoutcome$ using \cref{eq:syndrome_def}.
\STATE For each edge $e$ of $\tilde H_T$, set $\gamma(e) = 0$.
\STATE While there exists at least one odd cluster do:
\STATE \hspace{1cm} Select an odd cluster $C$ with minimum perimeter. If there are multiple such clusters select the least recently grown.
\STATE \hspace{1cm} Compute $\Gamma_C$ defined in \cref{eq:Gamma_C}.
\STATE \hspace{1cm} For all $e \in B(C)$ do $\gamma(e) \leftarrow \gamma(e) + \Gamma_C$.
\STATE Compute the set $\varepsilon$ of edges $e$ of $\tilde G_T$ such that the two halves of $e$ in $\tilde H_T$ are fully grown.
\STATE Compute the 1-chain $\tilde x = \textbf{Peeling}(\varepsilon, \hat s)$ of $\tilde G_T$.
\RETURN the restriction of $\tilde x$ to the graph $G_T$.
\end{algorithmic}
\end{algorithm}

The growth procedure of our soft UF decoder~\cref{algo:soft_uf} differs from the growth of the variant of the UF decoder proposed in~\cite{Huang2020UF} in two ways.
First, we grow half-edges instead of the original edges of the decoding graph. This is why we perform the growth in the graph $\tilde H_T$ instead of $\tilde G_T$.
Second, we prioritize the least recently grown cluster.
We checked that these two modifications bring a small improvement of the error threshold of hard UF decoder for the standard circuit noise model.

The time complexity of the soft UF decoder depends on the precision required for the edge weights of the soft decoding graph. 
Like in the circuit level study of~\cite{Huang2020UF}, we expect that a finite precision is sufficient to achieve good decoding performance. Our simulations are done with 32-bits of precision for the edge weights.
In that case, the worst case complexity of the soft UF decoder remains $O(d^3 \alpha(d))$, where $\alpha(d)$ is the slowly increasing inverse Ackermann function.
This complexity is significantly better than the soft MWPM decoder. 

\section{Proof of decoding success}
\label{sec:decoder_success_proof}

In this section, we prove that the soft MWPM decoder and the soft UF decoder perform well.
First, we prove that the soft MWPM decoder returns a most likely fault set.
A key technical ingredient to establish this result is Lemma~\ref{lemma:error_proba} which provides the probability of a fault set given a set of soft outcomes as a function of the edge weights in the soft decoding graph.
Then, we propose a sufficient condition for the success of the soft UF decoder.

\subsection{Technical definitions}

Here we provide some definitions which are useful throughout this section.

Given a weighted graph $G$ and a 1-chain $a$, we denote by $|a|_w = \sum_{e \in a} w(e)$ the sum of the edge weights in $a$.
The diameter of a subset $C$ of vertices of the graph, denoted $\diam_w(C)$, is the maximum distance between two vertices of $C$ assuming that edge lengths are given by $w(e)$.
The diameter of a subset of edges or a 1-chain is defined to be the diameter of the set of incident vertices.

A fault set for a graphical model is a 1-chain in the fault graph $G_T$.
Similarly, we can represent the set of soft flips for a given fault set $x$ with a set of soft outcomes $m$ as a 1-chain $x_{\soft}$ in the graph $\tilde G_T$ with support on soft vertical edges.
In what follows, we denote $\tilde x = x + x_{\soft}$ the 1-chain representing a fault set and its soft flips for a given set of soft outcomes.
The value of $x_{\soft}$ depends on $x$ and the set of soft outcomes observed $m$.

The syndrome $\hat s(m)$ can be derived from the value of $\tilde x$ because $\hat m$ can be computed from $x$ and from the location of the soft flips.
As a result, we can talk about a fault set $\tilde x$ with trivial syndrome.

The {\em weighted minimum distance} of a graphical model, denoted $d_w$, is defined to be the minimum value $|x|_w$ for a fault set $x$ such that the residual error is a non-trivial logical error of the surface code.

To guarantee that the soft UF decoder performs well, we assume that the graphical model satisfies the following topological property.

\medskip
(C3) Let $\tilde x$ be a fault set with trivial syndrome.
If the diameter of $\tilde x$ satisfies $\diam_w(\tilde x) < d_w$ then $\pi(x)$ is a stabilizer of the surface code.

\medskip
Both soft circuit noise and soft phenomenological noise satisfy this condition with Gaussian soft measurements and amplitude damping soft measurements, which include the cases we study numerically.

\subsection{Success of the soft MWPM decoder}

The goal of this section is to prove the following result.

\begin{theo}\label{thm:mle_from_mwpm}
For any graphical model satisfying (C1) and (C2), the soft MWPM decoder (\cref{algo:soft_mwpm}) returns a  most likely fault set.
\end{theo}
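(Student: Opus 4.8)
The plan is to show that decoding via minimum weight perfect matching in the distance graph $K(\hat s)$ is equivalent to finding a fault set $x$ of maximum probability $\Prob(x \mid \softoutcome)$. The conceptual backbone is the identity promised by Lemma~\ref{lemma:error_proba}: for a fault set $x$ compatible with the observed soft outcomes $\softoutcome$, the conditional probability $\Prob(x \mid \softoutcome)$ should be expressible, up to an $x$-independent normalization, as $\exp(-|\tilde x|_w)$ where $\tilde x = x + x_{\soft}$ lives in the decoding graph $\tilde G_T$ and $w = w_{\tilde G_T}$ is the edge weight from \cref{eq:def_edge_weight}. Concretely, each hard edge $e \in x$ contributes a factor $p_e/(1-p_e) = e^{-w(e)}$ relative to the all-trivial reference, and each soft vertical edge forced by a soft flip contributes $L_{a,t}(m_{a,t}) = e^{-w(e)}$; the remaining factors (density at the observed soft value given the {\em actual} ideal outcome, and the $(1-p_e)$ terms) are the same for every valid $x$ and cancel in the maximization. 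So maximizing $\Prob(x\mid\softoutcome)$ is the same as minimizing $|\tilde x|_w$ over $\tilde x$ with $\partial_{\Vb}(\tilde x) = \hat s$.

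Next I would reduce this weighted minimization to matching. Fix the syndrome vertices $v_1,\dots,v_k$ (adding the ghost vertex $v_g$ if $k$ is odd, which is legitimate since all boundary vertices are identified and carry trivial syndrome). Any $\tilde x$ with $\partial_{\Vb}(\tilde x) = \hat s$ has, as a subgraph of $\tilde G_T$, odd degree exactly at the syndrome vertices (treating $v_g$ as absorbing the boundary), so it decomposes into edge-disjoint paths pairing up those vertices plus possibly cycles; discarding cycles only decreases weight since weights are non-negative (this uses (C2) for the hard edges and $L_{a,t}\le 1$ for the soft ones). Hence the minimum-weight such $\tilde x$ is obtained by choosing a perfect matching $M$ on $\{v_1,\dots,v_k\}$ (or with $v_g$) and, for each matched pair, a geodesic in $\tilde G_T$; its weight is $\sum_{\{u,v\}\in M} d_{\tilde G_T}(u,v) = \sum_{\{u,v\}\in M} w_K(\{u,v\})$, exactly the objective of the MWPM in $K(\hat s)$ computed in \cref{algo:soft_mwpm}. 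Conversely, the output $\tilde x = \sum_{\{u,v\}\in M}\geo(u,v)$ of the algorithm realizes a valid fault set (its restriction to $G_T$ has the right residual-error/syndrome bookkeeping), so the algorithm's output attains the minimum of $|\tilde x|_w$ and therefore the maximum of $\Prob(x\mid\softoutcome)$.

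A technical wrinkle to handle carefully: the geodesic union $\sum \geo(u,v)$ is a sum in $\Z_2$, so overlapping geodesics can cancel, and I must check that the resulting 1-chain still has boundary $\hat s$ on measurement vertices — this is immediate from $\partial$ being $\Z_2$-linear and $\partial(\geo(u,v)) = u+v$ (with boundary vertices collapsed to $v_g$). I also need that restricting $\tilde x$ from $\tilde G_T$ to $G_T$ (i.e.\ forgetting soft vertical edges, or rather: a soft vertical edge in the matching output corresponds to declaring a soft flip, which is consistent with $\softoutcome$ by construction of the weights) produces a genuine fault set whose induced ideal outcomes match what was observed after hardening. The main obstacle — really the only substantive step — is establishing Lemma~\ref{lemma:error_proba} cleanly, in particular getting the normalization/cancellation right and correctly accounting for the interaction between hard vertical edges and soft vertical edges on the same $(a,t)$--$(a,t+1)$ pair (the $p_{M,\mathrm{hardened}}$ bookkeeping of \cref{eq:net_hardened_flip_prob}); once that weight-vs-log-probability dictionary is in hand, the matching equivalence is the standard Edmonds argument adapted to the soft decoding graph, and (C1) is exactly what makes that graph a graph so the argument applies.
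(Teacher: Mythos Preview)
Your proposal is correct and follows essentially the same approach as the paper: both arguments hinge on Lemma~\ref{lemma:error_proba} to convert the MAP problem into minimizing $|\tilde x|_w$ over 1-chains with $\partial_{\Vb}(\tilde x)=\hat s$, and on the decomposition of such a 1-chain into paths pairing syndrome vertices (Lemma~\ref{lemma:syndrome_is_boundary} in the paper) to reduce to perfect matching in $K(\hat s)$. The only cosmetic difference is that the paper packages this as a proof by contradiction (assume a more likely $x'$ exists, extract from $\tilde x'$ a matching of smaller weight than $M$), whereas you argue directly that the minimum of $|\tilde x|_w$ is attained by a union of geodesics over a perfect matching; your observation that cycles can be discarded because all edge weights are non-negative is exactly what the paper encodes in the inequality $w(M')\le\sum_{e\in\tilde x'}w(e)$.
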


\begin{proof}
Let $\softoutcome$ be a set of soft outcomes for a graphical model $(G_T, p, f)$ and let $\hat s$ be the corresponding syndrome.
Given $\softoutcome$ as an input, \cref{algo:soft_mwpm} first generates the distance graph $K(\hat s) = (V_K, E_K)$ associated with $\hat s$. 
Then, it computes a minimum weight matching $M \subseteq E_K$ and the corresponding 1-chain
$$
\tilde x_M = \sum_{\{u, v\} \in M} \geo(u, v)
$$
in the decoding graph $\tilde G_T$.
The output of \cref{algo:soft_mwpm} is the fault set $x_M$ which is the restriction of $\tilde x_M$ to $G_T$.
We will prove by contradiction that it is a most likely fault set conditioned on the set of soft outcomes $\softoutcome$.

Assume that there exists a fault set $x'$ (that is a 1-chain in $G_T$) such that $\Prob(x' | \softoutcome) > \Prob(x | \softoutcome)$.
Consider the extension $\tilde x' = x' + x_{\soft}'$ as a 1-chain of $\tilde G_T$ obtained by adding the soft flip of $x'$.
The value of $x'_{\soft}$ can be derived from $x'$ and $\hat s$.
By \cref{lemma:syndrome_is_boundary}, the support of $\tilde x'$ contains a set of edge-disjoint paths in $G_T$ connecting each vertex of $\hat s$ either with another vertex of $\hat s$ or with a boundary vertex.
One can prove the existence of these paths by induction.
If our set of paths contains two paths connecting to the boundary, we merge these two paths into a single paths in $\tilde G_T$.
Then, each path of this set corresponds to an edge in $K(\hat s)$ and the complete set of paths induces a perfect matching $M'$ in $K(\hat s)$.
Moreover, the weight of the matching $M'$, which is a sum of weights of edges included in $\tilde x'$, satisfies $w(M') \leq \sum_{e \in \tilde x} w(e)$. 
Based on \cref{lemma:error_proba}, this weight is equal to
\begin{align*}
w(M') \leq \sum_{e \in \tilde x'} w(e)= C - \log \Prob(x' | \softoutcome) \cdot
\end{align*}
Using the assumption $\Prob(x' | \softoutcome) > \Prob(x_M | \softoutcome)$, this yields
\begin{align*}
w(M') 
& \leq C - \log \Prob(x' | \softoutcome) \\
& < C - \log \Prob(x_M | \softoutcome) \\
& < \sum_{e \in x_M} w_K(e)
= w(M),
\end{align*}
which contradicts the minimality of $M$.
This concludes the proof of the proposition.
\end{proof}

In the proof of \cref{thm:mle_from_mwpm} we relied on the following two lemmas.

\begin{lemma} \label{lemma:syndrome_is_boundary}
Let $x$ be a fault set for the graphical model $(G_T, p, f, \pi)$ and let $m$ be a set of soft outcomes observed in the measurement vertices.
Then, the syndrome of $x$ satisfies
\begin{align} \label{eq:syndrome_equal_boundary}
\hat s(m) = \partial_{\Vb}(\tilde x)
\end{align}
where $\tilde x = x + x_{\soft}$ and $x_{\soft}$ is the 1-chain of soft flips.
\end{lemma}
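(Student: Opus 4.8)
The plan is to verify the identity $\hat s(m) = \partial_{\Vb}(\tilde x)$ componentwise, by tracking, for each measurement vertex $(a,t)$, the contributions coming from ``horizontal''/general fault edges of $x$ versus the soft-vertical edges of $x_{\soft}$, and matching them against the definition $\hat s_{a,t} = \hardoutcome_{a,t} + \hardoutcome_{a,t-1} \pmod 2$ from \cref{eq:syndrome_def}. First I would unpack the two sides. On the right, for a fixed measurement vertex $v = (a,t)$, the coefficient of $v$ in $\partial_{\Vb}(\tilde x)$ is the parity of the number of edges of $\tilde x$ incident to $v$, i.e.\ $\sum_{e \ni v} (x_e + (x_{\soft})_e) \pmod 2$. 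On the left, I would express $\hardoutcome_{a,t}$ itself as a $\Z_2$-sum: by the sampling rules of \cref{subsec:syndrome}, $\hardoutcome_{a,t} = \idealoutcome_{a,t} + (\text{soft flip indicator at }(a,t))$, and $\idealoutcome_{a,t}$ is the cumulative parity $\sum_{e \in x,\ e \ni (a,t')\text{ for some }t' \le t} 1 \pmod 2$ of all faults of $x$ touching plaquette $a$ at a round $\le t$ (this is exactly the ``induces a change for all $t' \ge t$'' statement).

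Then the computation telescopes. Substituting into $\hat s_{a,t} = \hardoutcome_{a,t} + \hardoutcome_{a,t-1}$, the cumulative ideal-outcome sums cancel except for faults of $x$ that touch $(a,t)$ exactly — these are precisely the fault edges of $x$ incident to the vertex $(a,t)$, contributing $\sum_{e\in x,\ e\ni(a,t)} 1 \pmod 2$, which is the ``$x$ part'' of $\partial_{\Vb}(\tilde x)$ at $v$. The soft-flip indicators at $(a,t)$ and $(a,t-1)$ survive, and I would argue these are exactly matched by the soft-vertical edges of $x_{\soft}$: the soft-vertical edge at round $t$ on plaquette $a$ connects $(a,t)$ and $(a,t+1)$, so vertex $(a,t)$ is incident to the soft-vertical edges of rounds $t$ and $t-1$, and $(x_{\soft})$ carries coefficient $1$ on the round-$t$ soft-vertical edge iff a soft flip occurred at $(a,t)$. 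Hence the surviving soft-flip terms equal $\sum_{e\in x_{\soft},\ e\ni(a,t)} 1 \pmod 2$. Adding the two parts gives $\partial_{\Vb}(\tilde x)$ at $v$, completing the componentwise check; and both sides vanish on boundary vertices by definition, so the identity holds as $0$-chains.

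The main obstacle is bookkeeping the boundary rounds and conventions carefully: the $t=0$ convention $\hardoutcome_{a,0}=0$ (so the round-$0$ soft-vertical edge is absent or contributes nothing), and the top layer $t=T+1$ of perfect measurements (no soft edges, no soft flips there), must be handled so the telescoping is clean at both ends. A secondary subtlety, flagged in the statement's own hint, is that one may prefer to run this as an induction on the rounds $t$ rather than a direct telescoping sum — the induction step is exactly the single-round identity $\hardoutcome_{a,t}+\hardoutcome_{a,t-1} = (\partial x)_{a,t}\text{-at-round-}t + (\text{soft-flip edges at }(a,t))$, and the base case is the $t=1$ layer with the $\hardoutcome_{a,0}=0$ convention. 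Either route is routine once the role of $x_{\soft}$ as ``the soft-vertical edges recording where hard and ideal outcomes disagree'' is pinned down; no deep idea is needed beyond linearity of $\partial$ over $\Z_2$.
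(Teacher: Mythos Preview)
Your proposal is correct and takes essentially the same approach as the paper: a componentwise verification at each measurement vertex $(a,t)$, separating the contribution of the fault edges of $x$ (which the telescoping in $\hat s_{a,t}=\hardoutcome_{a,t}+\hardoutcome_{a,t-1}$ reduces to the parity of edges of $x$ incident to $(a,t)$) from the soft-flip contributions encoded by $x_{\soft}$. The paper's proof is terser---it does a quick case split on whether a soft flip occurs at the vertex rather than writing out the telescoping and the soft-flip indicators at both $(a,t)$ and $(a,t-1)$ as you do---but the content is the same.
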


\begin{proof}
To prove that the syndrome is equal to $\partial_{\Vb}(\tilde x)$, consider the syndrome value at a measurement vertex $(a, t)$.
Assume first that no soft flip occurs in this vertex.
Then, the hard outcome and the ideal outcome are identical and by definition of the syndrome, we have
$
\hat s(a, t) = \idealoutcome(a, t) + \idealoutcome(a, t-1) \pmod 2 \cdot
$
Using the definition of the ideal outcome, we see that this number is the parity of the number of edges of $x$ that are incident to $(a, t)$.
This proves \cref{eq:syndrome_equal_boundary} in the absence of soft flip.
This equation is also satisfied if a soft flip occurs in $(a, t)$ thanks to the additional vertical soft edge incident to $(a, t)$.
\end{proof}

\begin{lemma} \label{lemma:error_proba}
Let $x$ be a fault set for a graphical model $(G_T, p, f, \pi)$ and let $m$ be the set of soft outcomes measured.
The probability $\Prob(x | \softoutcome)$ of the fault set $x$ conditioned on the soft measurement outcome $m$ is given by
\begin{align} \label{eq:error_prob_equal_weight}
\log \Prob(x | \softoutcome) = C + \sum_{e \in \tilde x} w(e)
\end{align}
where $\tilde x = x + x_{\soft}$ and $x_{\soft}$ is the 1-chain of soft flips, for some constant $C$ that is independent of $x$.
\end{lemma}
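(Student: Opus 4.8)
The plan is to compute $\Prob(x \mid \softoutcome)$ directly from Bayes' rule and then verify that, after taking logarithms, the terms organize into a constant plus $\sum_{e \in \tilde x} w(e)$ with $w$ as in \cref{eq:def_edge_weight}. Write $\Prob(x \mid \softoutcome) \propto \Prob(\softoutcome \mid x)\,\Prob(x)$, since $\Prob(\softoutcome)$ is a normalization independent of $x$ and can be folded into the constant $C$. The factor $\Prob(x)$ is given by \cref{eq:def_error_proba}, namely $\prod_{e \in x} p_e \prod_{e \notin x}(1-p_e)$. The factor $\Prob(\softoutcome \mid x)$ factorizes over measurement vertices $(a,t)$ because, conditioned on $x$ (hence on all ideal outcomes $\idealoutcome_{a,t}$), the soft outcomes are drawn independently: $\Prob(\softoutcome \mid x) = \prod_{(a,t) \in \Vb} \pdf^{(\idealoutcome_{a,t})}_{a,t}(\softoutcome_{a,t})$.

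First I would massage the prior factor. Divide and multiply by $(1-p_e)$: $\Prob(x) = \big(\prod_{e \in E_T}(1-p_e)\big) \prod_{e \in x} \frac{p_e}{1-p_e}$. The first product is $x$-independent, so it contributes to $C$; taking $\log$, the remaining term is $\sum_{e \in x} \log\!\big(p_e/(1-p_e)\big) = -\sum_{e \in x} w(e)$ by the ``otherwise'' branch of \cref{eq:def_edge_weight}. Wait --- I should be careful about the sign. The weights in \cref{eq:def_edge_weight} are stated to be nonnegative, which forces $w(e) = -\log(p_e/(1-p_e))$ to be $\geq 0$, i.e.\ $p_e < 1/2$, consistent with (C2); so $\log(p_e/(1-p_e)) = -w(e)$ and the hard-edge contribution to $\log\Prob(x)$ is $-\sum_{e\in x} w(e)$. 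The statement of the lemma writes $+\sum_{e\in\tilde x} w(e)$, so either the intended convention differs by a sign or $C$ absorbs it; in the write-up I would follow the sign conventions so that the final identity reads as stated, possibly by noting $\log\Prob(x\mid m)$ is maximized exactly when $\sum_{e\in\tilde x}w(e)$ is minimized --- which is all that \cref{thm:mle_from_mwpm} actually uses. (I will match the paper's sign; the essential content is the decomposition into a constant plus a sum over $\tilde x$ of these weights.)

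Next I would handle the likelihood factor, vertex by vertex. At a measurement vertex $(a,t)$, let $\hardoutcome_{a,t}$ be the hard outcome derived from $\softoutcome_{a,t}$ via \cref{eq:def_hard_outcome}; by definition $\pdf^{(\hardoutcome_{a,t})}_{a,t}(\softoutcome_{a,t}) \geq \pdf^{(\hardoutcome'_{a,t})}_{a,t}(\softoutcome_{a,t})$. There are two cases. If no soft flip occurs at $(a,t)$, then $\idealoutcome_{a,t} = \hardoutcome_{a,t}$ and the corresponding soft vertical edge is \emph{not} in $\tilde x$; the factor is $\pdf^{(\hardoutcome_{a,t})}_{a,t}(\softoutcome_{a,t})$. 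If a soft flip occurs, $\idealoutcome_{a,t} = \hardoutcome'_{a,t}$, the soft vertical edge \emph{is} in $x_{\soft}$ hence in $\tilde x$, and the factor is $\pdf^{(\hardoutcome'_{a,t})}_{a,t}(\softoutcome_{a,t})$. In both cases I would factor out $\pdf^{(\hardoutcome_{a,t})}_{a,t}(\softoutcome_{a,t})$: the product $\prod_{(a,t)\in\Vb}\pdf^{(\hardoutcome_{a,t})}_{a,t}(\softoutcome_{a,t})$ depends only on $\softoutcome$ (since $\hardoutcome$ is a deterministic function of $\softoutcome$), so its log goes into $C$; what remains is $\prod_{(a,t):\text{soft flip}} \frac{\pdf^{(\hardoutcome'_{a,t})}_{a,t}(\softoutcome_{a,t})}{\pdf^{(\hardoutcome_{a,t})}_{a,t}(\softoutcome_{a,t})} = \prod_{e \in x_{\soft}} L_{a(e),t(e)}(\softoutcome_{a(e),t(e)})$ using \cref{eq:def_likelihood_ratio}. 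Taking $\log$, this is $\sum_{e\in x_{\soft}} \log L_{a,t}(\softoutcome_{a,t}) = -\sum_{e\in x_{\soft}} w(e)$ by the ``soft'' branch of \cref{eq:def_edge_weight}.

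Finally I would assemble: $\log\Prob(x\mid\softoutcome) = C + \big(\text{contribution of }\log\Prob(x)\big) + \big(\text{contribution of }\log\Prob(\softoutcome\mid x)\big) = C + \sum_{e\in x}w(e)\cdot(\pm 1) + \sum_{e\in x_{\soft}}w(e)\cdot(\pm 1)$, and since $\tilde x = x + x_{\soft}$ is a disjoint union (hard edges of $G_T$ versus soft vertical edges), the two sums combine into $\sum_{e\in\tilde x}w(e)$ up to the overall sign convention, with $C$ collecting $\log\Prob(\softoutcome)$, $\sum_{e\in E_T}\log(1-p_e)$, and $\sum_{(a,t)\in\Vb}\log\pdf^{(\hardoutcome_{a,t})}_{a,t}(\softoutcome_{a,t})$ --- all manifestly independent of $x$. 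I do not anticipate a serious obstacle here; the only subtle points are (i) keeping the sign bookkeeping consistent with the paper's convention that weights are nonnegative, and (ii) being explicit that the ``ideal outcomes depend on earlier faults'' subtlety does not break the factorization of $\Prob(\softoutcome\mid x)$ --- conditioned on the \emph{full} fault set $x$ all ideal outcomes are determined, so the soft outcomes are conditionally independent, which is exactly what the factorization needs.
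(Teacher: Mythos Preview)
Your proof is correct and follows essentially the same route as the paper: Bayes' rule, factor $\prod_e(1-p_e)$ out of the prior, factor $\prod_{(a,t)}\pdf^{(\hardoutcome_{a,t})}_{a,t}(\softoutcome_{a,t})$ out of the likelihood, and identify the residual sums with the edge weights from \cref{eq:def_edge_weight}. Your worry about the sign is legitimate --- the paper's own derivation and the lemma statement are not entirely consistent on this point --- but you handle it correctly: the content used downstream is only that $\log\Prob(x\mid\softoutcome)$ equals a constant plus (or minus) $\sum_{e\in\tilde x}w(e)$, so that maximizing the posterior is equivalent to minimizing the total weight.
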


\begin{proof}
To compute $\Prob(x | \softoutcome)$, we apply Bayes theorem to obtain
\begin{align} \label{eq:lemma_bayes}
\Prob(x | \softoutcome)
\propto
\Prob(\softoutcome | x) \Prob(x)
\cdot
\end{align}
Here, we abuse notation and use $\Prob(\softoutcome | x)$ to represent the probability density function of the continuous random variable $\softoutcome$ conditioned on $x$.
For simplicity, we focus on continuous distributions over sets of soft outcomes $\softoutcome$, but the same argument applies to discrete soft outcome distributions.

We will consider the two terms of the right hand side of \cref{eq:lemma_bayes} separately. 

Based on \cref{eq:def_error_proba}, we have
$
\Prob(x)
= C_1 \prod_{e \in x} \frac{p_e}{1-p_e}
$
where $C_1$ is the product of all the terms $(1-p_e)$ with $e \in E_T$.
Taking the $\log$, this yields
\begin{align} \label{eq:lemma_proba_x}
\log \Prob(x)
= C_1' + \sum_{e \in x} + \log \left( \frac{p_e}{1-p_e} \right),
\end{align}
for some constant $C_1'$.

Consider now $\Prob(\softoutcome | x)$. 
By definition of $\softoutcome$, we have $\Prob(\softoutcome | x) = \Prob(\softoutcome | \idealoutcome)$.
Because the soft outcomes in different measurement vertices $(a, t)$ are independent, we have
\begin{align*}
\Prob(\softoutcome | \idealoutcome)
& = \prod_{(a, t) \in \Vb}
	\pdf^{(\idealoutcome_{a, t})}_{a, t}(\softoutcome_{a, t})
\\
& = 
\prod_{\substack{(a, t) \in \Vb \\ \idealoutcome_{a, t}=\hardoutcome_{a, t}}}
	\pdf^{(\hardoutcome_{a, t})}_{a, t}(\softoutcome_{a, t})
\\
& \times \prod_{\substack{(a, t) \in \Vb \\ \idealoutcome_{a, t}\neq\hardoutcome_{a, t}}} 
	\pdf^{(\hardoutcome_{a, t}+1)}_{a, t}(\softoutcome_{a, t})
\\
& = 
\prod_{(a, t) \in \Vb} 
	\pdf^{(\hardoutcome_{a, t})}_{a, t}(\softoutcome_{a, t})
\\
& \times \prod_{\substack{(a, t) \in \Vb \\ \idealoutcome_{a, t}\neq\hardoutcome_{a, t}}} 
	\frac
	{
		\pdf^{(\hardoutcome_{a, t}+1)}_{a, t}(\softoutcome_{a, t})
	}
	{
		\pdf^{(\hardoutcome_{a, t})}_{a, t}(\softoutcome_{a, t})
	}
\\
& = 
C_2 
\prod_{\substack{(a, t) \in \Vb \\ \idealoutcome_{a, t} \neq \hardoutcome_{a, t}}} L_{a, t}(m_{a, t})
\end{align*}
$ C_2 = \prod_{(a, t) \in \Vb} \pdf_{a, t}^{(\hardoutcome_{a, t})}(\softoutcome_{a, t})$ depends only on the set of soft outcomes $\softoutcome$, since $\hardoutcome$ is fully determined by $\softoutcome$.
The condition $\idealoutcome_{a, t} \neq \hardoutcome_{a, t}$ in the second product is equivalent to the presence of a soft flip in $(a, t)$.
Therefore, taking the $\log$ we obtain
\begin{align} \label{eq:lemma_proba_m_given_ideal}
\log \Prob(\softoutcome | \idealoutcome)
= 
C_2' + \sum_{e \in x_{\soft}} -\log L_{a, t}(m_{a, t}) \cdot
\end{align}
By definition of the edge weights in the decoding graph, combining \cref{eq:lemma_bayes,eq:lemma_proba_x,eq:lemma_proba_m_given_ideal} concludes the proof of the lemma.
\end{proof}

From \cref{thm:uf_success} and \cref{lemma:error_proba} we can infer the following useful corollary.

\begin{coro} \label{coro:soft_mwpm_low_weight_correction}
For any graphical model satisfying (C1) and (C2), the soft MWPM decoder (\cref{algo:soft_mwpm}) successfully corrects all fault sets such that $|\tilde x|_w < d_w/2$.
\end{coro}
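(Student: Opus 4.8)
The plan is to argue by contradiction, using only \cref{thm:mle_from_mwpm}, \cref{lemma:error_proba} and \cref{lemma:syndrome_is_boundary} together with the definition of the weighted minimum distance $d_w$ (note that condition (C3) is not needed here, consistent with the hypothesis of the corollary). Fix the true fault set $x$, the observed soft outcomes $m$, and the syndrome $\hat s = \hat s(m)$. Write $\tilde x = x + x_{\soft}$ for the associated $1$-chain in the decoding graph $\tilde G_T$, and let $x_M$ be the output of \cref{algo:soft_mwpm}, with $x_{M, \soft}$ the $1$-chain of soft flips of $x_M$ relative to the same $m$ and $\tilde x_M = x_M + x_{M, \soft}$. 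By hypothesis $|\tilde x|_w < d_w/2$. The decoder succeeds precisely when the net residual Pauli $\pi(x)\,\pi(x_M) = \pi(x + x_M)$ acts trivially on encoded states, i.e.\ is a stabilizer; since soft vertical edges carry trivial residual error, $\pi(x + x_M) = \pi(\tilde x + \tilde x_M)$.

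Next I would record two facts. First, applying \cref{lemma:syndrome_is_boundary} to the fault set $x$ and separately to the fault set $x_M$, each paired with the same observed outcomes $m$, gives $\partial_{\Vb}(\tilde x) = \hat s = \partial_{\Vb}(\tilde x_M)$, hence $\partial_{\Vb}(\tilde x + \tilde x_M) = 0$. Second, by \cref{thm:mle_from_mwpm} the fault set $x_M$ is a most likely one given $m$, so $\Prob(x_M \mid m) \ge \Prob(x \mid m)$; since by \cref{lemma:error_proba} the conditional probability of a fault set is, up to an additive constant independent of the fault set, a decreasing function of the weight of its extended chain, this forces $|\tilde x_M|_w \le |\tilde x|_w$.

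Now suppose, for contradiction, that the decoder fails. Then $\pi(\tilde x + \tilde x_M)$ is not a stabilizer; but $\tilde x + \tilde x_M$ has trivial syndrome by the first fact above, so it commutes with every stabilizer generator and is therefore either a stabilizer or a non-trivial logical operator, and failure means it is a non-trivial logical. Let $z$ be the restriction of $\tilde x + \tilde x_M$ to $G_T$; soft vertical edges contribute trivially to the residual, so $\pi(z) = \pi(\tilde x + \tilde x_M)$ is a non-trivial logical, whence $z$ is one of the fault sets appearing in the definition of $d_w$ and so $|z|_w \ge d_w$. Since the soft vertical edge weights are non-negative, $|\tilde x + \tilde x_M|_w \ge |z|_w \ge d_w$. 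On the other hand, $|\cdot|_w$ is subadditive under the $\Z_2$ sum (edges common to $\tilde x$ and $\tilde x_M$ cancel, removing non-negative weight), so combining with the second fact,
\[
d_w \;\le\; |\tilde x + \tilde x_M|_w \;\le\; |\tilde x|_w + |\tilde x_M|_w \;\le\; 2\,|\tilde x|_w \;<\; d_w,
\]
a contradiction. Hence the decoder succeeds.

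The inequality chain is routine; the step that needs care is the bookkeeping between the two graphs $G_T$ and $\tilde G_T$ and the residual-error map $\pi$. Concretely: one must be precise that a $\partial_{\Vb}$-trivial chain of $\tilde G_T$ which is not a stabilizer represents a non-trivial logical class (the standard homological structure of the surface-code decoding graph, used implicitly throughout), that $\partial_{\Vb}(\tilde x_M) = \hat s$ exactly (here it helps that the soft vertical edges of $\tilde G_T$ form a forest terminating at the ghost vertex, so no $\partial_{\Vb}$-closed chain is supported on them), and that restricting a chain from $\tilde G_T$ to $G_T$ preserves its residual Pauli and keeps it within the family over which $d_w$ is minimized while never increasing its weight. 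Once these are pinned down, the argument above goes through unchanged.
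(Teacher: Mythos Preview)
Your argument is correct and follows the same line as the paper's own proof: use \cref{thm:mle_from_mwpm} to conclude that the output $x_M$ is most likely, invoke \cref{lemma:error_proba} to turn this into $|\tilde x_M|_w \le |\tilde x|_w$, and then bound $|\tilde x + \tilde x_M|_w \le 2|\tilde x|_w < d_w$ to deduce success. Your version is in fact more careful than the paper's, which simply asserts that $|\tilde x + \tilde y|_w < d_w$ ``proves the decoder succeeds'' without spelling out the restriction from $\tilde G_T$ back to $G_T$, the explicit use of \cref{lemma:syndrome_is_boundary} to get $\partial_{\Vb}(\tilde x + \tilde x_M)=0$, or the contrapositive appeal to the definition of $d_w$; you supply all of these. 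The caveats you flag in your final paragraph (bookkeeping between $G_T$ and $\tilde G_T$, that soft vertical edges carry trivial residual and non-negative weight, that the restriction $z$ is a bona fide fault set in $G_T$) are exactly the right ones and are all handled by the definitions in the paper.
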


\begin{proof}
Let $x$ be the fault set which occurs and let $m$ be the set of soft outcomes observed.
Assume that $|\tilde x|_w < d_w/2$.
By \cref{thm:uf_success}, the soft MWPM decoder returns a fault set $y$ such that
$
\Prob(y | m) \leq \Prob(x | m).
$
By \cref{lemma:error_proba}, this translates into
\begin{align}
\sum_{e \in \tilde y} w(e) \leq \sum_{e \in \tilde x} w(e)
\end{align}
which means $|\tilde y|_w \leq |\tilde x|_w$.
As a result, the residual fault set after correction $\tilde x + \tilde y$
has weight at most $|\tilde x + \tilde y| \leq 2|\tilde x|_w < d_w$ proving the decoder succeeds.
\end{proof}

\subsection{Success of the soft UF decoder}

Above we saw in \cref{coro:soft_mwpm_low_weight_correction} that the soft MWPM decoder successfully corrects all fault sets such that $|\tilde x|_w < d_w/2$.
Here we establish a similar sufficient condition for the soft UF decoder.

\begin{theo}\label{thm:uf_success}
For any graphical model satisfying (C1), (C2) and (C3), the soft UF decoder (\cref{algo:soft_uf}) successfully corrects any fault set such that $|\tilde x|_w < d_w/2$.
\end{theo}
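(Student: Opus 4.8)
The plan is to mimic the structure of the standard analysis of the (hard) Union-Find decoder, adapting it to the soft setting via the weighted notions already introduced. Let $x$ be the fault set that occurs, $m$ the observed soft outcomes, $\tilde x = x + x_{\soft}$, and suppose $|\tilde x|_w < d_w/2$. The soft UF decoder grows clusters in the split-edge graph $\tilde H_T$ until every cluster is even, then calls \textbf{Peeling} to produce $\tilde y = \textbf{Peeling}(\varepsilon, \hat s)$ with $\partial_{\Vb}(\tilde y) = \hat s$; by \cref{lemma:syndrome_is_boundary} we also have $\partial_{\Vb}(\tilde x) = \hat s$, so $\tilde x + \tilde y$ is a fault set (in $\tilde G_T$) with trivial syndrome. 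Success of the decoder amounts to showing that the residual fault set, restricted to $G_T$, has residual Pauli error equal to a stabilizer; by condition (C3) it suffices to show $\diam_w(\tilde x + \tilde y) < d_w$.

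First I would control the diameter of the grown region. The key quantitative claim — exactly as in the hard UF analysis of \cite{delfosse2017UF} — is that when the growth procedure terminates, each connected cluster $C$ that carries part of the syndrome has weighted diameter bounded by (a constant times) the total weight of the faults contributing to its odd syndrome vertices; concretely, for each final cluster one shows $\diam_w(C) \le 2\,|\tilde x \cap C|_w$ or a similar linear bound, because an odd cluster only keeps growing while its perimeter is minimal, and the syndrome vertices inside it must be "explained" by edges of $\tilde x$ whose total weight forces the cluster to stop expanding once it is large enough to enclose the relevant portion of $\tilde x$. Summing over clusters and using that $\tilde y \subseteq \varepsilon$ lies inside the grown region while $\tilde x$ must touch every odd syndrome vertex, one gets $\diam_w(\tilde x + \tilde y) \le$ (small constant) $\cdot |\tilde x|_w < d_w$, invoking the hypothesis $|\tilde x|_w < d_w/2$ with the constant chosen to make the arithmetic close. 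The halving of edge weights in passing from $\tilde G_T$ to $\tilde H_T$ must be tracked carefully here, but it is precisely what makes the growth increments $\Gamma_C$ in \cref{eq:Gamma_C} behave well.

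Second, with the diameter bound in hand, I apply (C3) to $\tilde x + \tilde y$: since it has trivial syndrome and $\diam_w(\tilde x + \tilde y) < d_w$, the induced residual Pauli $\pi(x + y)$ (where $y$ is the restriction of $\tilde y$ to $G_T$) is a stabilizer of the surface code, which is exactly what "the soft UF decoder succeeds" means — the correction $\prod_{e \in y}\pi_e$ differs from the true residual error $\prod_{e \in x}\pi_e$ only by a stabilizer.

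The main obstacle is the diameter bound for the final clusters. In the hard UF decoder this is a clean combinatorial statement about unweighted cluster perimeters and Manhattan-type distances; in the soft setting one must verify that the "grow the minimum-perimeter odd cluster, break ties by least-recently-grown" rule — now applied to real-weighted half-edges rather than uniform-length edges — still yields the linear relationship between a cluster's final weighted diameter and the weight of the faults it must account for. I expect this to require re-deriving the invariant that the weighted growth a cluster accumulates before becoming even is at most the weight of a fault chain connecting its syndrome vertices, being careful that weights can be arbitrarily small (so the inverse-Ackermann bookkeeping and the termination argument stay intact) yet the minimum-perimeter selection still prevents any single cluster from over-growing. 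The rest — reduction to (C3), the syndrome-matching via \cref{lemma:syndrome_is_boundary}, and the final arithmetic — is routine.
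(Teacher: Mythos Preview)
Your high-level plan matches the paper's: bound the weighted diameter of (pieces of) $\tilde z = \tilde x + \tilde y$ in terms of $|\tilde x|_w$, then invoke (C3). However, three points need correction before the argument closes.

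First, you propose to show $\diam_w(\tilde x + \tilde y) < d_w$ and apply (C3) once to the whole of $\tilde z$. This will not work in general: $\tilde z$ may consist of several connected components lying far apart in $\tilde G_T$, so its global diameter (maximum ambient-graph distance between two of its vertices) can be large even when each piece is small. The paper instead decomposes $\tilde z$ into connected components $z_1, z_2, \dots$, proves $\sum_k \diam_w(z_k) < d_w$, and applies (C3) to each $z_k$ separately (each has trivial syndrome since it is a connected component of a $1$-chain with trivial restricted boundary).

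Second, the cluster-diameter bound is not the per-cluster inequality $\diam_w(C) \le 2\,|\tilde x \cap C|_w$ you suggest, but a global sum. With $\Gamma_1, \Gamma_2, \dots$ the successive growth increments and $C_1, C_2, \dots$ the final clusters, one has $\sum_i \diam_w(C_i) \le 2\sum_j \Gamma_j$ (each step enlarges the selected cluster's diameter by at most $2\Gamma_j$, and merges only help the sum) and $\sum_j \Gamma_j \le \sum_{e\in\tilde x}\gamma(e)$. The latter is the crux: because the cluster being grown is \emph{odd}, parity forces an edge of $\tilde x$ to cross its boundary, so each growth step covers at least $\Gamma_j$ worth of $\tilde x$. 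The minimum-perimeter selection rule you emphasise plays no role in this argument.

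Third, $\tilde y \subseteq \varepsilon$ lies inside the grown region, but $\tilde x$ need not: edges $e\in\tilde x$ with $\gamma(e)=0$ can protrude from the final clusters and hence belong to $\tilde z$ without being contained in any $C_i$. The paper accounts for them explicitly, adding $\sum_{e\in\tilde x,\,\gamma(e)=0} w(e)$ to the cluster-diameter sum before comparing to $2|\tilde x|_w$; without this term the bound $\sum_k \diam_w(z_k) < d_w$ does not follow.
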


\begin{proof}
Let $x$ be a fault set such that $|\tilde x|_w < d_w/2$.
Consider the graph $\tilde H_T(\gamma)$ represented in \cref{fig:uf_growth} induced by the grown edges at the end of the growth. 
For each edge $e = \{u, v\}$ of $\tilde H_T$ where $v$ is the splitting vertex, if $\gamma(e) = w(e)$, then $\tilde H_T(\gamma)$ contains the edge $e$, otherwise it contains an edge $\{u, u'\}$ with length $\gamma(e)$ which represents the grown part of the edge $e$.

Denote by $\Gamma_1, \Gamma_2, \dots$ the growth rates in step 6 of \cref{algo:soft_uf}.
The connected components $C_1, C_2, \dots$ of $\tilde H_T(\gamma)$ satisfy
\begin{align}
\sum_i \diam_w(C_i) \leq 2 \sum_j \Gamma_j, \\
\sum_{e \in \tilde x} \gamma(e) \geq \sum_j \Gamma_j.
\end{align}
The first equation holds because during the $j$th growth step, the selected cluster grows in all directions of $\Gamma_j$.
Moreover, because the growing cluster is odd, this growth step covers at least a fraction $\Gamma_j$ of an edge of $x$, leading to the second equation.
Combining the two equations above, we obtain
\begin{align} \label{eq:uf_success_proof_diam_bound}
\sum_i \diam_w(C_i) \leq 2 \sum_{e \in \tilde x} \gamma(e) \cdot
\end{align}

Let $\tilde y$ be the fault set returned by the peeling decoder in step 9 and let $\tilde z = \tilde x + \tilde y$.
Denote by $\tilde z_1, \tilde z_2, \dots$ the connected components of $\tilde z$.
By construction $\tilde y$ is included in $\tilde H_T(\gamma)$.
The only edges of $\tilde z$ outside of $\tilde H_T(\gamma)$ are the edges of $\tilde x$ such that $\gamma(e) = 0$.
As a result, we have
\begin{align}
\sum_k \diam_w(z_k) 
& \leq
	\sum_i \diam_w(C_i) 
 	+ \sum_{\substack{e \in \tilde x \\ \gamma(e) = 0}} w(e) \\
& \leq 
	2 \sum_{e \in \tilde x} \gamma(e)
	+ \sum_{\substack{e \in \tilde x \\ \gamma(e) = 0}} w(e) \\
& \leq 
	2 |\tilde x|_w < d_w \cdot
\end{align}
Therein, the second inequality is obtained by applying \cref{eq:uf_success_proof_diam_bound}.
Because the last round of measurement is assumed to be perfect and the decoder returns a fault set $\tilde y$ which has the same syndrome as $\tilde x$, the residual fault set after correction $\tilde z$ has trivial syndrome and we proved that it obeys $|\tilde z| < d_w$.
Therefore, applying (C3), the decoder succeeds.
\end{proof}

\section{Performance of soft decoders}
\label{sec:numerics}

\begin{figure*}[]
    \includegraphics[width=\textwidth]{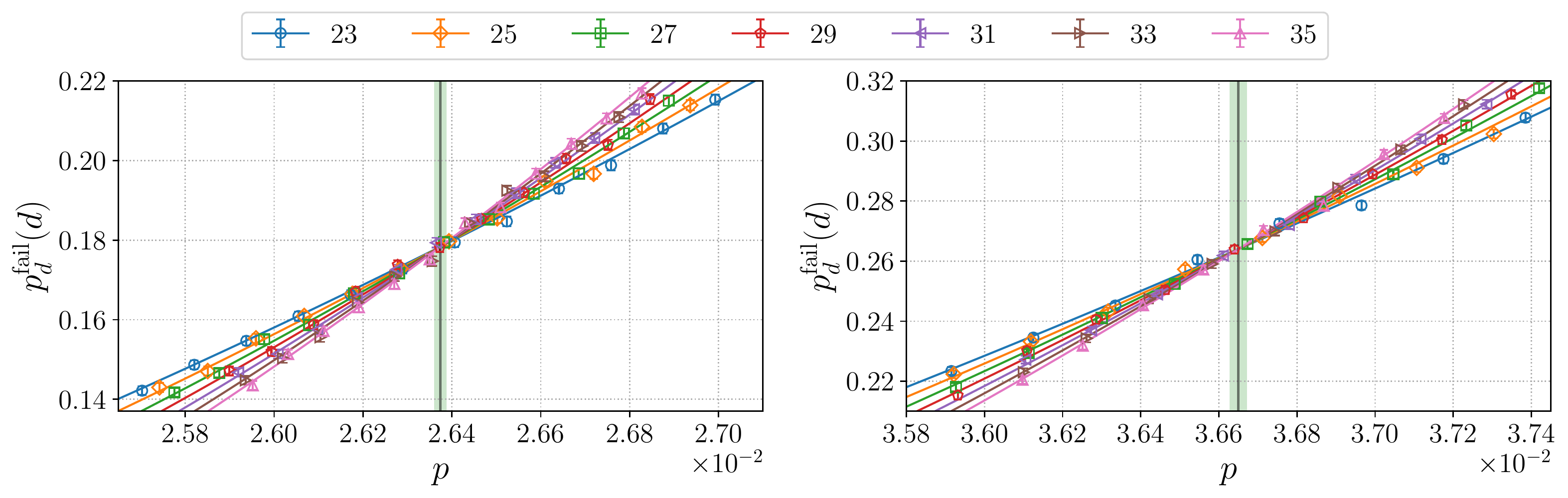}
    \caption{\label{fig:thresh_pheno}
    The failure probability of the protocol for $d$ rounds of error correction with soft phenomenological noise using the hard Union Find decoder (left) and the soft Union Find decoder (right).
    In this noise model the $X$ and $Z$ errors are identical and independent, with the $X$ error probability $p_D=p$, no hard ideal outcome flips $p_{M}=0$, and with Gaussian soft noise with a spread $\sigma$ selected such that $p_{M,\text{hardened}} = p$. 
	The threshold value $p^*$ for each decoder is estimated from the crossings of curves of different distance, and one standard deviation of uncertainty is shaded.
    }
\end{figure*} 

\begin{figure}[]
    \includegraphics[width=\columnwidth]{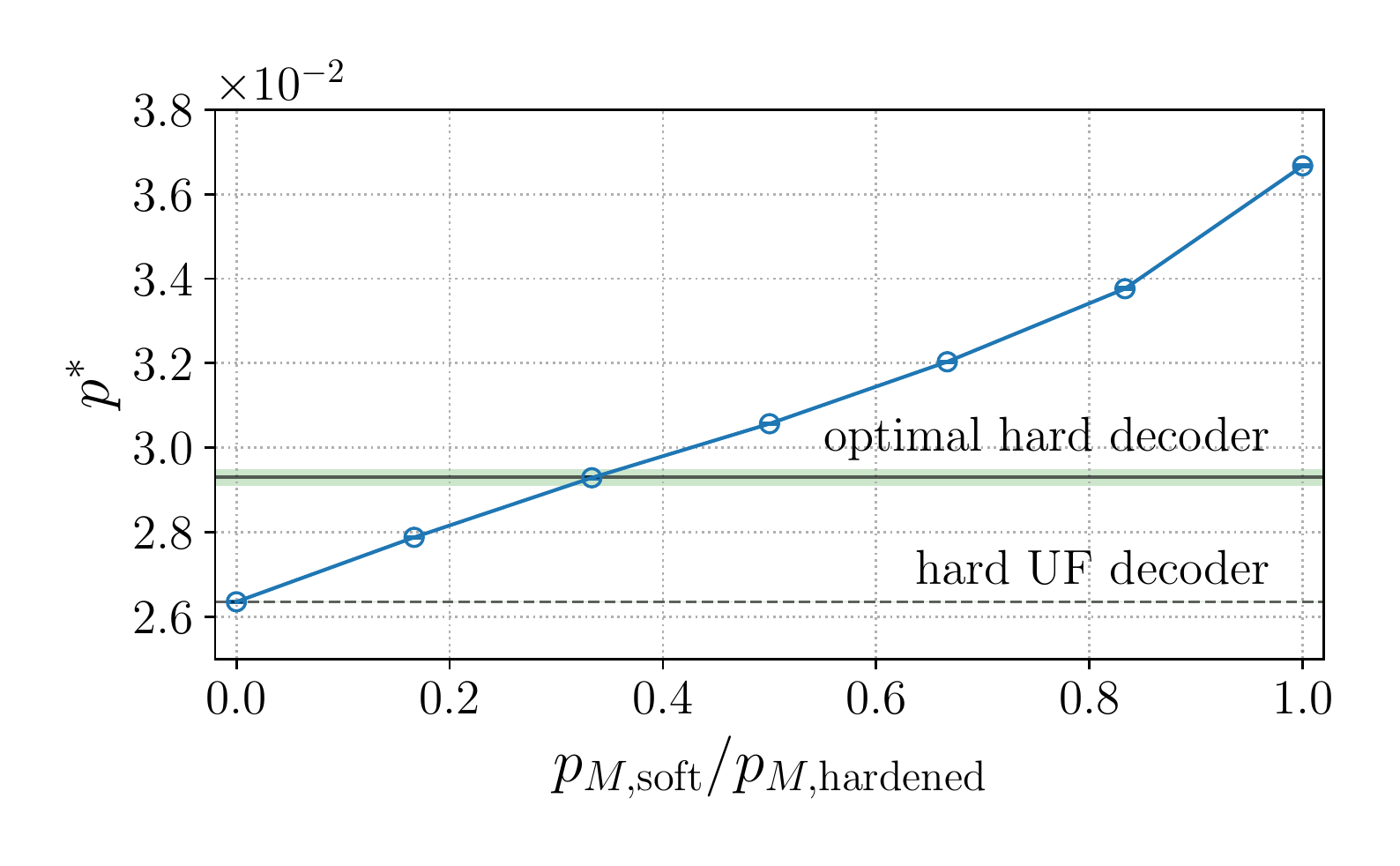}
    \caption{\label{fig:thresh_sweep_pheno}
	The threshold $p^*$ obtained with the soft decoder with soft phenomenological noise with different ratios of soft to hard noise in the measurement.
	In this noise model the $X$ and $Z$ errors are identical and independent, with the $X$ error probability $p_D=p$, with a variable ratio $r=p_{M,\text{soft}}/p_{M,\text{hardened}}$ but with the overall hardened flip probability $p_{M,\text{hardened}}=p$. 
	The threshold of the hard Union Find decoder is the same for all values of $r$ and is indicated with a dashed line.
	Note that $r=1$ corresponds to the setting of \cref{fig:thresh_pheno} in which there is no hard measurement flip, while $r=0$ corresponds to the only measurement noise being a hard flip.
    The optimal threshold with a hard decoder was estimated in Ref.~\cite{wang2003confinement_higgs} and is indicated with a solid horizontal line with numerical uncertainty indicated with green shading.
    }
\end{figure}

\begin{figure*}[]
    \includegraphics[width=\textwidth]{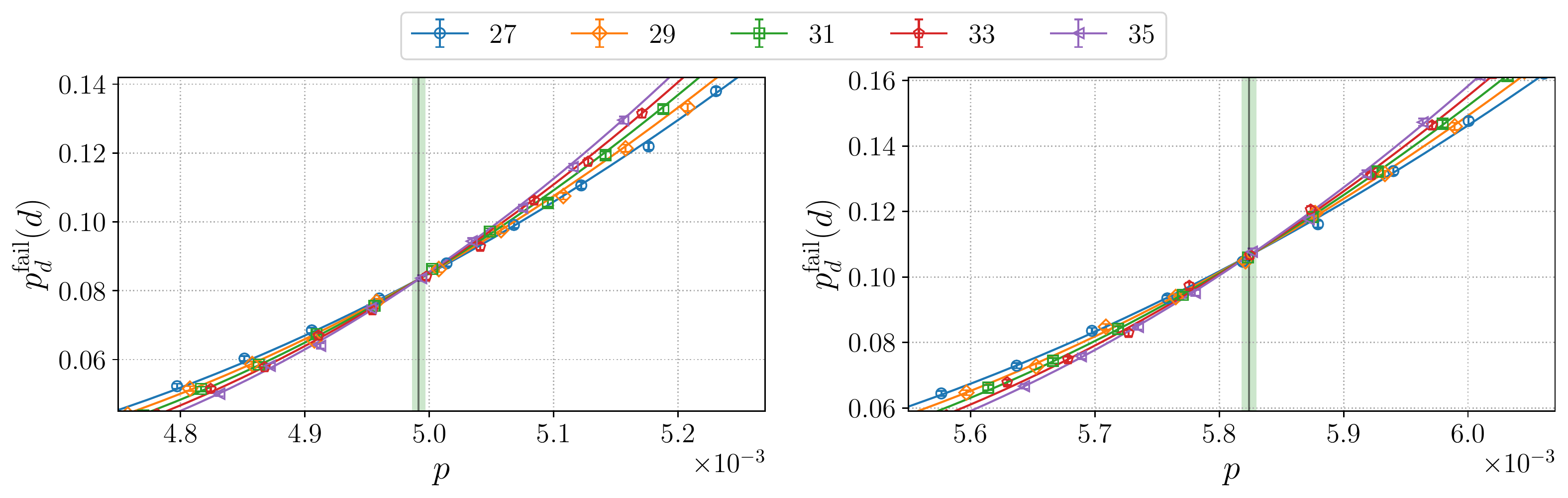}
    \caption{\label{fig:thresh_circuit_noise_transmon}
    The failure probability of the protocol for $d$ rounds of error correction with soft circuit noise using the hard Union Find decoder (left) and the soft Union Find decoder (right).
    We set $p_{IG}$, $p_{IM}$ and $p_{\text{CNOT}}$ equal to $p$, set $p_M = 0$ and use $\pdf^{(0)} = {\cal N}(+1, \sigma^2)$ and $\pdf^{(1)} = {\cal N}(-1, \sigma^2)$ with $\sigma$ chosen such that $p_{M,\text{hardened}}=10p$.
    	The threshold value $p^*$ for each decoder is estimated from the crossings of curves of different distance, and one standard deviation of uncertainty is shaded.
    }
\end{figure*}

In this section we analyze the performance of some of the decoders defined in \cref{sec:soft_decoding} under the soft noise models defined in \cref{sec:graphical_model}.
In particular we compare the soft Union Find decoder that we have introduced with the standard hard Union Find decoder that has access only to the hardened outcomes produced by discretizing the soft measurement outcomes.
In \cref{subsec:logical_error_rate} we first specify the protocol we use to evaluate the error correction performance, along with other definitions used in the numerical analysis.
Then in \cref{subsec:performance_phenomenological} and \cref{subsec:performance_circuit} we present our numerical results for the performance of the soft Union Find decoder under soft phenomenological and soft circuit noise respectively.

\subsection{Error correction protocol}
\label{subsec:logical_error_rate}

It can be desirable to define a notion of {\em logical error rate} as the probability that a logical operation fails given a specific noise model and decoding scheme.
In this section we describe how we estimate the logical error rate of a noise model equipped with a decoder using Monte Carlo simulations.
Our simulations use a protocol starting with a perfectly prepared code state which undergoes $T$ rounds of noisy syndrome measurements before a final round of perfect syndrome measurement, which specifies a graphical noise model $(G_T, p, f, \pi)$.

In a Monte Carlo run of this protocol, first a fault set $x$ is sampled according to the distribution $\Prob$ associated with the noise model.
For each fault $e$ in $x$, the corresponding residual error $\pi_e$ is applied to the data qubits.
Then, for each measurement vertex $(a, t)$ (including the final round of perfect measurement), the ideal outcome $\idealoutcome_{a,t}$ is computed and a soft outcome $\softoutcome_{a,t}$ is sampled according to the distribution $f_{a, t}^{(\idealoutcome_{a,t})}$.
Based on the set of soft outcomes, the decoder returns a fault set $y$ and a correction
\begin{align} \label{eqn:fault_set_residual_error}
\pi(y) = \prod_{e \in y} \pi_e
\end{align}
is applied to the data qubits intending to cancel the effect of $x$.
The residual error on the data qubits after correction is then
$\pi(x) + \pi(y)$.

Since the protocol ends with a perfect round of measurement, the decoders studied here are guaranteed to produce a correction with a residual error which is either a stabilizer or a non-trivial logical operator for the surface code.
We say the decoder {\em succeeds} if the residual error $\pi(x) + \pi(y)$ is a stabilizer.
Otherwise, we say the decoder {\em fails}.
With a distance-$d$ surface code, the failure probability of the protocol $p^{\text{fail}}_d(T)$ is the expected value that the protocol fails, and in practice we estimate it from the fraction of failures among a finite number of samples.
The decoders we study act separately on $X$ and $Z$ errors. 
We often find it convenient to separately consider the probability $p^{X \text{fail}}_d(T)$ of a logical $X$ failure of the protocol, irrespective of whether or not there is a logical $Z$ failure.

There are a number of approaches in the literature to estimate a logical error rate from the failure probability of this kind of protocol. 
The simplest and most standard approach is to simply set $T=d$ and take $p^{\text{fail}}_d(d)$ itself to be an estimate of the logical error rate. 
The reason for setting $T = d$ in this approach is that typically logical error operations (such as lattice surgery) with the surface code are assumed to take $d$ rounds to ensure fault tolerance. 
However, the fact that the protocol ends with a perfect measurement round can lead this approach to significantly underestimate the true logical failure rate of a logical operation which takes $d$ rounds~\cite{chao2020optimization}. 
Another drawback of this approach is that some logical operations may not take $d$ rounds.

Here we instead estimate the logical failure rate per round, which we denote by $\bar{p}_d$, which can then be multiplied by the number of rounds in a logical operation to indicate its logical error rate.
To avoid dependence on the perfect preparation and final stabilizer measurement round in the protocol on our estimate, we use a model of the protocol which applies for long times $T$.
Our model is based on the assumption that during each round a logical $X$ error occurs with probability $\bar{p}/2$ independently of other rounds~\cite{fowler2021repetition}. 
After $T$ rounds, there is a logical $X$ failure in this model if an odd number of rounds have experienced logical $X$ errors, which occurs with probability
\begin{align}
\hat{p}^{X~\text{fail}}(T) = \frac{1 - (1-\bar{p})^T}{2},\label{eq:failure_rate_per_round}
\end{align}
which can be inverted to give $\bar{p} = 1-(1 -2\hat{p}^{X~\text{fail}}(T))^{1/T}$.
We use a hat to differentiate the failure probability $\hat{p}^{X~\text{fail}}(T)$ of our model from the failure probability $p^{X~\text{fail}}(T)$ of the protocol itself. 
Assuming the validity of this model becomes accurate for large $T$, we take the logical failure rate per round $\bar{p}_d$ to be calculated from the protocol's failure probability according to
\begin{align}
\bar{p}_d(T)&= 1-(1 -2p_d^{X~\text{fail}}(T))^{1/T},\label{eq:failure_rate_per_round_data_T}\\
\bar{p}_d&= \lim_{T \to \infty}\left( \bar{p}_d(T) \right).\label{eq:failure_rate_per_round_data}
\end{align}
In practice we estimate $\bar{p}_d$ by evaluating $\bar{p}_d(T)$ for a large value of $T$.
More details on this model's validity, and on the numerical convergence of \cref{eq:failure_rate_per_round_data} can be found in \cref{app:bc_indep}.

The {\em error correction threshold} is the value of a single-parameter noise model below which error correction can be performed to achieve arbitrarily good protection by increasing the code distance. 
For noise models which have a number of parameters such as the soft noise models we have introduced, the concept of a threshold generalizes to a threshold surface~\cite{svore2007architecture2d}.
Alternatively, one can introduce constraints on the parameters of a nosie model so that only one free parameter remains, and then consider the threshold value of the resulting single-parameter noise model.
There are many approaches to numerically estimate the value of the threshold, all of which can be expected to agree in the limit of large code distances~\cite{beverland2021codeswitching} because they are asymptotic properties of a phase boundary~\cite{dennis2002tqm}.
Here we estimate the threshold $p^*$ of single-parameter noise model using the standard approach of identifying the intersections of the curves $p^{\text{fail}}(d)$ for different distances as a function of the parameter $p$.

\subsection{Performance under soft phenomenological noise}
\label{subsec:performance_phenomenological}

Here we consider the soft phenomenological noise model which was defined in \cref{subsec:phenom_model} and use the notation therein.
In this model, the $X$ and $Z$ errors are generated and corrected independently. 
For simplicity we focus on the $X$ errors here -- the $Z$ errors are handled in the same way.

Recall that in this model, probabilities $p_D$ and $p_{M}$ are assigned to horizontal and vertical edges in the $X$-type fault graph $G_T$ respectively, and the pair of Gaussian probability density functions $\pdf^{(0)} = {\cal N}(+1, \sigma^2)$ and $\pdf^{(1)} = {\cal N}(-1, \sigma^2)$ are assigned to each measurement vertex in $G_T$.

In \cref{fig:thresh_pheno} we obtain a single-parameter noise model from the phenomenological noise model by setting $p_D=p$, $p_{M}=0$ and setting $\sigma$ such that $p_{M,\text{hardened}} = p$, where $p_{M,\text{hardened}}$ is calculated according to \cref{eq:net_hardened_flip_prob}. 
We plot the failure probability of the protocol for $d$ rounds using the soft UF decoder and the hard UF decoder, and estimate the threshold $p^*$ by considering the crossings of the resulting curves for different values of $d$.
Note that the failure probabilities $p^{\text{fail}}_d(T)$ reported include failures of both $X$ and $Z$ type.
We observe that the soft UF decoder significantly outperforms the hard UF decoder, exhibiting a threshold of \num{3.665(2)e-2} versus \num{2.637(1)e-2}. 
Moreover, the optimal threshold of a decoder using hard information only is \num{2.93(2)e-2}~\cite{wang2003confinement_higgs} which is still below the value obtained using the soft UF decoder.

In \cref{fig:thresh_sweep_pheno} we repeat this analysis again with $p_D=p$ and $p_{M,\text{hardened}}=p$, but with different fixed values of $r=p_{M,\text{soft}}/p_{M,\text{hardened}}$. 
For each value of $r$, this specifies a single-parameter noise model for which we can estimate the threshold using the same approach as was used in \cref{fig:thresh_pheno} for the case $r=1$. 
We plot these estimated threshold values for various values of $r$ in \cref{fig:thresh_sweep_pheno}.

Note that by construction the hard noise models which are obtained by hardening the soft phenomenological noise models we study in \cref{fig:thresh_pheno} and \cref{fig:thresh_sweep_pheno} reduce to the standard single-parameter phenomenological noise model.
In particular, the hardened version of the soft noise model for all values of $r$ has $X$ error on each qubit with probability $p_D=p$ and outcome flips of each plaquette measurement with the same probability $p_{M,\text{hardened}}=p$. 

\subsection{Performance under soft circuit noise}
\label{subsec:performance_circuit}

\begin{figure}[]
    \includegraphics[width=\columnwidth]{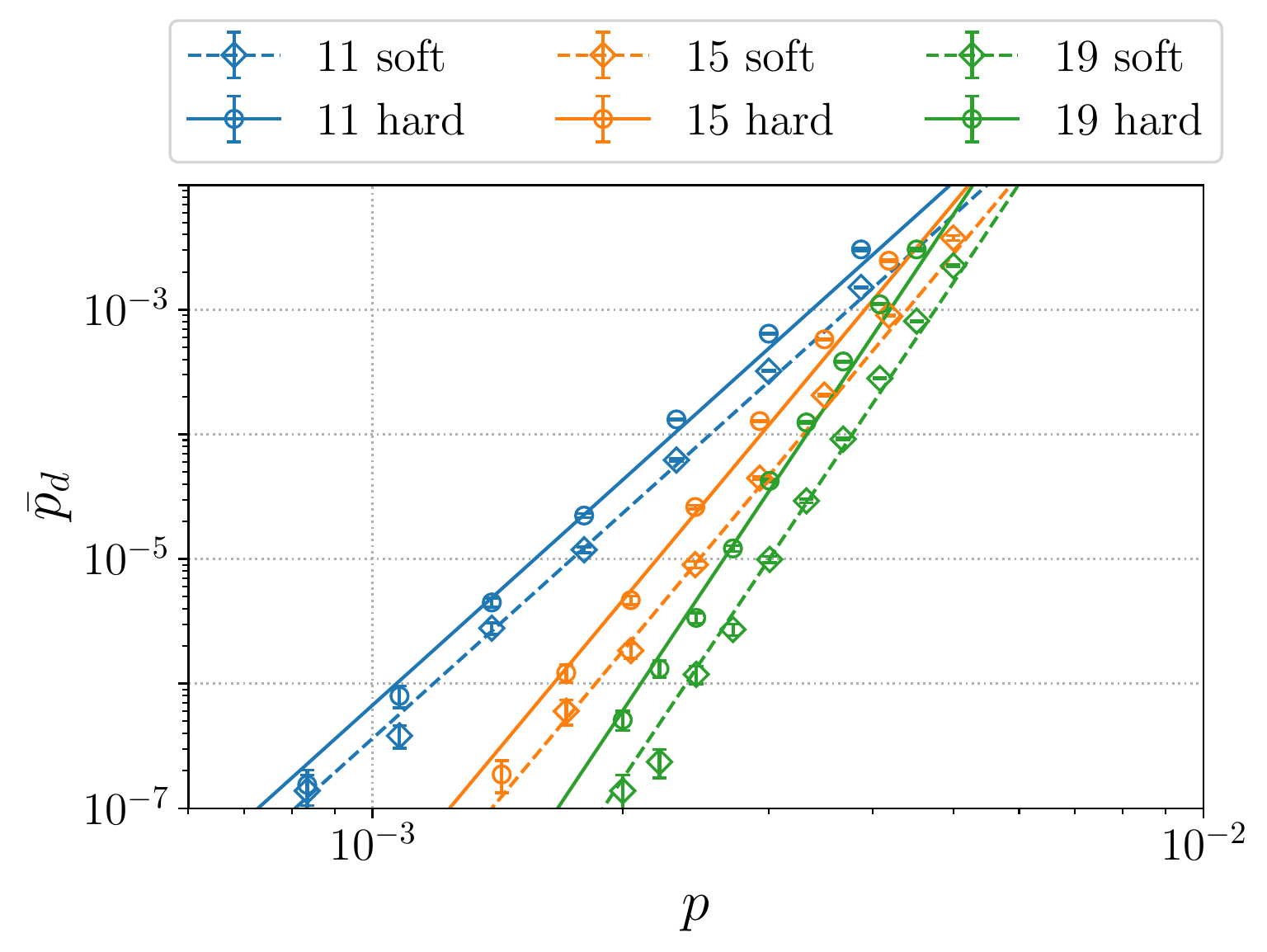}
    \caption{\label{fig:low_p_circuit_noise}
The failure probability per round for low error rates with the same soft circuit noise parameterization as in \cref{fig:thresh_circuit_noise_transmon}.
    	 We estimate the failure rate per round $\bar{p}_d$ from the failure probability of the protocol for $T=1000$ rounds using \cref{eq:failure_rate_per_round_data}. 
     }
\end{figure}

Here we consider the soft circuit noise model which was defined in \cref{subsec:circuit_model}.
Recall that in this model, probabilities $p_{IG}$, $p_{IM}$ and $p_{\text{CNOT}}$ specify the probabilities of each possible fault of idle qubits during CNOT time steps, idle qubits during measurement time steps, and faults of CNOT operations.
Additionally, the probability $p_M$ specifies the flip of ideal outcomes, and probability distribution functions $\{\pdf^{(0)}(\mu), \pdf^{(1)}(\mu) \}$ characterize the soft noise.
For the numerical studies here we use the standard exclusive circuit noise model to generate the faults (see \cref{app:inclusive_exclusive}), while we use the inclusive circuit noise model is \cref{subsec:circuit_model} to specify the edge weights of the decoder as described in \cref{subsec:decoding_graph}.
To improve the performance of the decoder, we merge all edges sharing the same endpoints in the decoding graph and combine their edge weights using the formula given in Appendix E of \cite{chao2020optimization}.

In \cref{fig:thresh_circuit_noise_transmon} we obtain a single-parameter noise model from the soft circuit noise model by setting $p_{IG}=p$, $p_{IM}=p$, $p_{\text{CNOT}} = p$, $p_M = 0$ and setting $\pdf^{(0)} = {\cal N}(+1, \sigma^2)$ and $\pdf^{(1)} = {\cal N}(-1, \sigma^2)$ with $\sigma$ chosen such that $p_{M,\text{hardened}}=10p$. 
We plot the failure probability of the protocol for $d$ rounds using the soft UF decoder and the hard UF decoder, and estimate the threshold value of $p$ by considering the crossings of the resulting curves for different values of $d$.
We observe that the soft UF decoder outperforms the hard UF decoder, exhibiting a threshold of \num{5.824(6)e-3} versus \num{4.991(5)e-3}. 
In \cref{fig:low_p_circuit_noise} we estimate the failure rate per round $\bar{p}_d$ using the soft UF decoder and the hard UF decoder with \cref{eq:failure_rate_per_round_data} for $T=1000$ rounds.

For many hardware approaches, the readout fidelity can be significantly lower than the gate fidelity, which motivated us to consider a measurement flip probability $p_{M,\text{hardened}}$ which was ten times larger than the gate error probability in \cref{fig:thresh_circuit_noise_transmon}.
In \cref{tab:result_table} we list a number of threshold values including those for soft circuit noise but with $p_{M,\text{hardened}}$ equal to the gate error probability, in which case the threshold gap for the hard and soft decoder is less pronounced due to the relatively larger effects of ancilla qubit errors during the syndrome extraction circuit.

\begin{table}[h]
\centering
{
    \newcolumntype{A}{>{\centering\arraybackslash}m{7.8em}} 
    \newcolumntype{B}{>{\centering\arraybackslash}m{5.5em}}
    \renewcommand\arraystretch{1.2}
    \begin{tabular}[t]{| A | B B |} 
        \hline
        soft noise                                   &  soft UF threshold                     & hard UF threshold                       \\\hline\hline
        \vspace{0.15em}phenomenological              & \vspace{0.15em}\SI{3.665(2)}{\percent} & \vspace{0.15em}\SI{2.637(1)}{\percent}  \\[0.5ex]\hline 
        circuit \mbox{$p_{M,\text{hardened}} = p$}   &  \SI{0.727(1)~}{\percent}              &  \SI{0.702(1)~}{\percent}               \\[2ex]\hline   
        circuit \mbox{$p_{M,\text{hardened}} = 10p$} &  \SI{0.5824(6)}{\percent}              &  \SI{0.4991(5)}{\percent}               \\[2ex]         
        \hline
    \end{tabular}
}
\caption{\label{tab:result_table}
    Thresholds observed when using the soft and hard decoders for various soft noise models.
    For soft phenomenological noise we set $p_D=p$, $p_{M}=0$ and $\sigma$ is chosen such that $p_{M,\text{hardened}} = p$.
    We consider two cases for soft circuit noise,  each with $p_{IG} = p_{IM} = p_{\text{CNOT}} =p$ and $p_M = 0$, but with different $p_{M,\text{hardened}}$.
}
\end{table}

\section{Measurement time tradeoff}
\label{sec:measure_time_tradeoff}

\begin{figure}
    \includegraphics[width=\columnwidth]{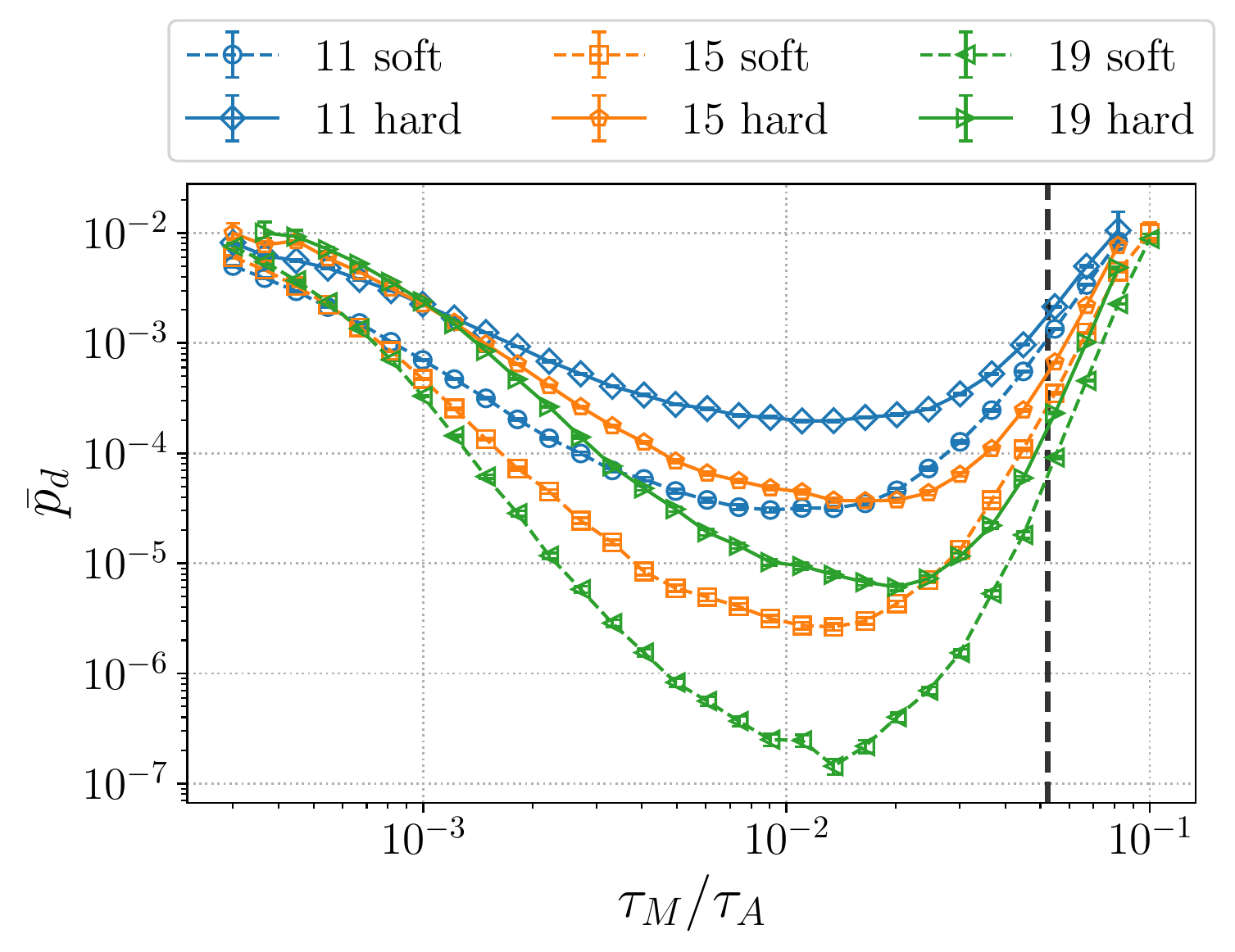}
    \caption{\label{fig:meas_tradeoff} The logical error rate per
      round $\bar{p}_d$ for the parametric circuit noise model defined
      in section \cref{sec:measure_time_tradeoff} as a function of
      measurement time $\tau_M$.  The $\tau_M$ which minimizes
      $\bar{p}_d$ depends on the distance and decoder.  The vertical
      dashed line marks the measurement time that minimizes the
      average soft-flip probability of a single measurement with respect to measurement time
      (instead of the logical error rate), and leads to
      sub-optimal logical error rates.
    }
\end{figure}

In this section, we establish that the use of soft information can provide an advantage in the overall logical cycle time of a quantum computing platform in a model that considers the effects of measurement time.
In real quantum error correction applications, selecting a measurement time for measurements represents a trade-off between measurement noise and data noise:
Using too short a measurement time gives little information about the state of the ancilla.
However, too long a measurement time gives time for additional errors to accumulate on the data.

We define an experimentally motivated variant of the soft circuit noise model where the noise parameter for each fault location is adjusted based on the duration that each circuit operation takes.
Recall that in the soft circuit noise model, probabilities $p_{IG}$, $p_{IM}$ and $p_{\text{CNOT}}$ specify the probabilities of each possible fault of idle qubits during CNOT time steps, idle qubits during measurement time steps, and faults of CNOT operations.
Additionally, $p_M$ specifies the probability of a flip of an ideal outcome, and the probability distribution functions $\{\pdf^{(0)}(\mu), \pdf^{(1)}(\mu) \}$ characterize the soft noise.
In our {\em parametric circuit noise model}, we consider:
\begin{itemize}
\item $p_{IG} =  1 - e^{-\frac{\tau_G}{\tau_D}}$,
\item $p_{IM} = 1 - e^{-\frac{\tau_M}{\tau_D}} $,
\item $p_{\text{CNOT}} =  1 - e^{-\frac{\tau_G}{\tau_D}}$,
\item $p_M=0$ and $\pdf^{(\bar{\mu})}(\mu) = \pdfAD^{(\bar{\mu})}(\mu;\tau_M,\T1m,\tau_F)$.
\end{itemize}

This noise model assumes depolarizing noise, with additional Gaussian soft noise with amplitude damping during measurement. 
As such, our {\em parametric circuit noise model} is expressed in terms of the depolarizing time $\tau_D$, the gate and measurement times $\tau_G$ and $\tau_M$, along with additional parameters which specify the amplitude damping channel including the relaxation time $\T1m$ and the fluctuation time $\tau_F$ as described in \cref{subsec:amplitude_damping_model} and \cref{app:toy-model-measurement}.
As in \cref{subsec:performance_circuit} we use the standard exclusive circuit noise model to generate the faults in our numerical simulations here, while we use the inclusive noise model to specify the edge weights of the decoder.

In \cref{fig:meas_tradeoff} we explore the effect of measurement time on the failure probability.
In this plot, we set $\tau_G =\SI{10}{\nano\second}$, $\tau_D = \SI{30}{\micro\second}$, $\T1m=\SI{15}{\micro\second}$, $\tau_F =\SI{100}{\nano\second}$ and vary $\tau_M$.
These parameters are within the range that can be expected to be achievable using near-term experiments~\cite{arute2019supremacy,nagirneac2019fastcz,crippa99spinsgatesensing,jurcevic2021volume64,xue2021spins99}. 
We plot the failure rate per round $\bar{p}_d$ using the soft UF decoder and the hard UF decoder, by using \cref{eq:failure_rate_per_round_data} for $T=1000$ rounds.

In contrast, we may choose $\tau_M$ to minimize the average soft flip
probability $\frac{1}{2}\Prob(\text{soft flip} | 0) +
\frac{1}{2}\Prob(\text{soft flip} | 1)$ for a single measurement under
the Gaussian soft noise with amplitude damping model (a metric often
considered in experimental demonstrations). This ignores the errors
that accumulate in the data qubits during the measurement of the
ancilla, but minimizes the probability of errors in the hard decision
for each measurement (for details see
\cref{app:toy-model-measurement}).

For a long measurement time $\tau_M$, the data qubit noise rate per
round is high and as a result the logical failure probability
approaches that of a maximally mixed state.  If the measurement time
is too short, the information obtained by the syndrome extraction
circuit is not reliable enough, which translates into a high logical
error rate.  In order to obtain the best possible logical qubit, we
should select the measurement time that minimizes the logical error
rate.  In \cref{fig:meas_tradeoff}, we observe that the optimal
measurement time using the soft decoder is reached around
$\tau_M/\tau_A = \num{1e-2}$ with a logical error rate per round close
to $\num{1e-7}$ for distance 19.  For comparison, if the measurement
time is selected by minimizing the average soft-flip probability, the
measurement time is roughly five times longer and, for distance 19,
the logical error rate per round increases by nearly three orders of
magnitude to $\num{1e-4}$.  In these numerical results, we also
observe that the optimal measurement time for the soft UF decoder is
shorter than the optimal measurement for the hard UF decoder, which
indicates that metric to optimize is not simply a function of the soft
flip probabilities and the error rates on data qubits (as these are
the same for the hard and soft decoders).

\section{Conclusion}

We have described a general framework that incorporates soft
information into fault-tolerant quantum error correction with the
surface code.  This framework is independent of the physical
realization of the qubits, and requires only knowledge of the
distributions of soft outcomes conditioned on the qubit state.  We
give explicit constructions of two soft decoders for the surface code:
a soft Minimum Weight Perfect Matching decoder (which we show
identifies a most likely fault set for soft information), and a soft
Union-Find decoder.  Both of these decoders have the same
computational complexity as their counterparts that handle only hard
information, while the soft Union-Find decoder has lower computational
overhead than the soft Minimum Weight Perfect Matching decoder.

We introduce two error models to evaluate the decoder performance: one
where the syndrome is corrupted by Gaussian noise, and another
where the syndrome is corrupted by Gaussian noise and amplitude
damping. We show that under these error models the soft Union-Find
decoder outperforms its hard counterpart both in terms of error
threshold as well as sub-threshold logical error rate. Notably, the
soft Union-Find decoder has a threshold that is $25\%$ larger than the
best possible hard decoder for the surface code, as estimated by
statistical mechanics models.  These error models were also used to
study the trade-off between measurement time and logical error rate,
where we found a strong dependence between these quantities, and
showed that optimizing the measurement time with awareness of error
correction can reduce the measurement time five-fold while decreasing
the the logical error rate a thousand fold (even for relatively modest
surface code distances).  This highlights the benefits of jointly
optimizing the physical layer and the error correction layer.

It would be valuable to develop theoretical methods to study the
optimal decoding threshold for soft noise in the surface code---for
example, by searching for a mapping to a generalized classical
random-bond Ising model in which subsets of binary spins are be
replaced by continuous variables.

Beyond surface codes, it would be interesting to explore these soft
decoders with other codes such as color
codes~\cite{bombin2006topological}, other surface code
variants~\cite{ataides2021xzzx} and some quantum Low-Density
Parity-Check codes~\cite{delfosse2021toward}.

Given the gains observed with the use of soft
information in the surface code, we expect that the addition of soft
information to the quantum fault tolerance arsenal will shorten the
time horizon for useful scalable quantum computers.

\section{Acknowledgment}

M.E.B, M.P.S, and N.D. thank David Poulin for his friendship,
inspiration, and mentorship, as well as encouragement in the early
stages of this work. C.A.P. acknowledges helpful discussions with Evan
Zalys-Geller and  Microsoft Quantum for support during his internship.
C.A.P. is supported in part by Air Force Office of Scientific Research (AFOSR), FA9550-19-1-0360.

\newpage

\appendix

\section{Gaussian soft noise with amplitude damping}
\label{app:toy-model-measurement}

We consider one of the simple measurement models described in
Ref.~\cite{Gambetta2007measurement}, namely, the measurement of a
qubit with finite $\ket{1}$ lifetime $\tau_A$ and finite signal-to-noise
ratio. This model was originally proposed to describe the dispersive
measurement of superconducting qubits, but it applies to many other
solid-state qubit implementations such as quantum dots. We present the
toy model here in some detail for completeness.

In our toy model we assume the Hilbert space of the quantum system is
spanned by two orthogonal states $\ket{0}$ and $\ket{1}$. Once the
measurement is turned on at time $\tau=0$, the (classical) continuous
time output signal of the measurement apparatus is given by
\begin{align}
  \dif S(\tau) = v~R(\tau)~\dif \tau + \xi~\dif W(\tau)
\end{align}
where $v$ is the signal amplitude, $R(\tau)$ is the measurement response,
$\xi$ is the noise amplitude, and $\dif W(\tau)$ is the Wiener increment (a
white Gaussian noise process with variance $\dif \tau$ and zero mean).

We assume the state of the system collapses instantaneously into one
of the basis states once measurement starts at $\tau=0$, and that $R(\tau)$
depends on the state of the system instantaneously, such that
$R(\tau)=1$ if the system is in state $\ket{0}$, and $R(\tau)=-1$ if the
system is in state $\ket{1}$.

The instantaneous response is important because the quantum system can
spontaneously decay at any moment from the state $\ket{1}$ to the
state $\ket{0}$ (but will remain in the $\ket{0}$ indefinitely). The
decay time $K$ is an exponentially distributed random
variable with rate $1/\tau_A$. 
If we compute the probability that the
system is still in state $\ket{1}$ at time $\tau$, we find that ${\mathbb P}(\ket{1};
\tau) = e^{-\tau/\tau_A}$ as expected from the physics of spontaneous decay.

Throughout this appendix we will use capital letters to denote random
variables, and lower case letters otherwise. The probability density
function for a random variable $H$ will be denoted by $f_H(h)$, while
the corresponding cumulative density function will be denoted by
$F_H(h)$, where $F_H(h) = \int_{-\infty}^{h}~\dif {h'}~f_H(h') =
{\mathbb P}(H<h)$.  Following the convention in the main body of the
text, superscripted random variables indicate conditioning on the
quantum basis state, while conditioning on other random variables is
indicated with the $|$ symbol.

\subsection{Soft-outcome distributions}

In order to determine what was the outcome of the quantum measurement,
we inspect the measurement signal and aim to classify the observed
signal as corresponding to a $\ket{0}$ or a $\ket{1}$ outcome. One
approach is to simply integrate the measurement signal for some
measurement time $\tau_M$~\footnote{The signal may be post-processed
  in other
  ways~\cite{Gambetta2007measurement,gambetta2008measurement2}, but we
  focus on direct integration for simplicity.}. The resulting scalar
is given by the random variable
\begin{align}
  S = \int_0^{\tau_M}~\dif S(\tau), 
\end{align}
and we are free to choose $\tau_M$ in order to improve performance
({\it i.e.}, our ability to distinguish between $S$ when the state is
$\ket{0}$ and when the state is $\ket{1}$). We note that each
realization of $S$ corresponds to the soft outcome $\mu$.

Then, we have that
\begin{align}
  S
    & = \int_0^{\tau_M}~\dif S(\tau), \\
    & = \underbrace{\int_0^{\tau_M}\dif \tau~v~R(\tau)}_{P} + \underbrace{\int_0^{\tau_M}\dif W(\tau)~n}_{Q}.
\end{align}
It should be apparent that $Q \sim {\mathcal N}(0, \xi^2~\tau_M)$ regardless of
the qubit state, while the distribution for $P$ requires some
additional consideration conditioned on the initial state of the
qubit.

If the initial state is $\ket{0}$, we have that $P_0 = v~\tau_M$, so that
$S_0 \sim {\mathcal N}(v~\tau_M, \xi^2~\tau_M)$.

If the initial state is $\ket{1}$, then the state can decay at some
random time $K$.  We have that
\begin{align}
  P_1|K =
  \left\{
  \begin{array}{ll}
    - v ( 2K - \tau_M ), & K<\tau_M\\
    - v \tau_M, & K\ge \tau_M    
  \end{array}
  \right.
\end{align}
and where $K$ has an exponential distribution with mean $\frac{1}{\tau_A}$.

In order to compute the distribution for $P_1$, we first consider the
case where $\tau_M\to\infty$, so that the output of the integral is a
function of the random variable $K$, {\it i.e.},
\begin{align}
g(K) &= v(\tau_M - 2K)\\
g^{-1}(P) &= \frac{\tau_M}{2} - \frac{P}{2v}
\end{align}
so that
\begin{align}
\frac{\dif}{\dif P}g^{-1} = -\frac{1}{2v}
\end{align}
and thus the distribution for $g(K)$ is
\begin{align}
f_{g(K)}(p) = \frac{1}{2v} f_K\left( \frac{\tau_M}{2} - \frac{p}{2v} \right)  
\end{align}
where $f_K$ is the distribution for the decay time (an exponential
distribution).

Now, $g(K)$ differs from $P_1$ because $\tau_M$ is finite, so we
condition the transformation on the decay time. In particular,
\begin{align}
  f_{P_1|K<\tau_M}(p) &= \frac{f_{g(K)}(p)}{F_{K}(\tau_M)},\\
  f_{P_1|K\ge \tau_M}(p) &= \delta(p + v \tau_M),
\end{align}
and if we combine these we obtain
\begin{multline}
  f_{P_1}(p)
  =
  \Theta( p + v ~ \tau_M )~\frac{1}{2v}~f_{K}\left(\frac{\tau_M}{2} - \frac{p}{2v}\right) +\\
  \delta( p + v ~ \tau_M )~\left[1-F_{K}(\tau_M)\right]
\end{multline}
where $\delta$ is the Dirac function, and $\Theta(\tau)$ is the Heaviside
step function.

When $\tau_M\ll \tau_A$, decay events are unlikely, and thus $S_1$ is
approximately ${\mathcal N}(-v~\tau_M, \xi^2~\tau_M)$. When $\tau_M\gg \tau_A$, a
decay is almost certain to happen during the measurement, and $S_1$ is
approximately the sum of a $Q\sim{\mathcal N}(0, \xi^2~\tau_M)$ and an
a linear function of a exponential random variable with mean $\frac{1}{\tau_A}$.

We can obtain the probability density function for $S_1$ by convolving
$f_{P_1}$ with $f_{Q}$, resulting in
\begin{multline}
  \pdf_{S_1}(s) =
  \frac{
    e^{-\frac{\tau_M}{\tau_A}-\frac{(v \tau_M+s)^2}{2 \xi^2 \tau_M}}
  }{
    4 \xi v \tau_A \sqrt{2 \pi \tau_M}
  }\cdot\\
  \left\{
  4 v \tau_A +
  e^\frac{\left[\xi^2 \tau_M + 2 v \tau_A(v \tau_M + s)\right]^2}{8 \xi^2 v^2 \tau_A^2 \tau_M}
  \xi \sqrt{2 \pi \tau_M}\times\right.\\
  \left.
  \left[
    \mathrm{erf}\left(
    \frac{\xi^2 \tau_M + 2 v \tau_A (s + v \tau_M)}{2 \xi v \tau_A \sqrt{2 \tau_M}}
    \right)
    \right.
    \right.
    -\\
    \left.
    \left.
    \mathrm{erf}\left(
    \frac{\xi^2 \tau_M + 2 v \tau_A (s - v \tau_M)}{2 \xi v \tau_A \sqrt{2 \tau_M}}    
    \right)
  \right]
  \right\},
\end{multline}
so that we have analytical expressions for the conditional
distributions of the soft outcomes.

In the main text we use a slightly different convention for the soft
outcomes, where the conditional means are $\pm 1$ in the absence of
decays. This can be achieved by a simple change of variables, so that
\begin{align}
  \pdfAD^{(0)}(\mu) & = v~\tau_M~\pdf_{S_0}\left(v~\tau_M~\mu\right),\\
  \pdfAD^{(1)}(\mu) & = v~\tau_M~\pdf_{S_1}\left(v~\tau_M~\mu\right).
\end{align}
We can obtain a more convenient parameterization by defining
{\em the fluctuation time}
$\tau_F=\frac{\xi^2}{v^2}$, which allows us to write the conditional
distributions in terms of the dimensionless ratios $\frac{\tau_M}{\tau_F}$
and $\frac{\tau_M}{\tau_A}$, resulting in
\begin{align}
  \pdfAD^{(0)}(\mu;\tau_M,\T1m,\tau_F) &= \sqrt{\frac{\tau_M}{2\pi\tau_F}} e^{-\frac{(\mu-1)^2\tau_M}{2\tau_F}}
\end{align}
\begin{multline}
  \pdfAD^{(1)}(\mu;\tau_M,\T1m,\tau_F) =\\
  \sqrt{\frac{\tau_M}{2\pi\tau_F}}
  e^{-\frac{{(\mu+1)}^2\tau_M}{2\tau_F}-\frac{\tau_M}{\tau_A}}
  -\frac{1}{4}\frac{\tau_M}{\tau_A}e^{\frac{\tau_M\tau_F}{8\tau_A^2}+\frac{1}{2}\frac{\tau_M}{\tau_A^2}(\mu-1)}\times\\
  \left\{
  {\rm erf}\left[
    \sqrt{\frac{\tau_M\tau_F}{8\tau_A^2}}+(\mu-1)\sqrt{\frac{\tau_M}{2\tau_F}}
    \right]
  \right.\\
  \left.
  -
  {\rm erf}\left[
    \sqrt{\frac{\tau_M\tau_F}{8\tau_A^2}}+(\mu+1)\sqrt{\frac{\tau_M}{2\tau_F}}
    \right]
  \right\}
\end{multline}
It is natural to take $\tau_A$, $\tau_F$ to be fixed, and vary
$\tau_M$ to optimize the measurement. The justification for this is
that $\tau_A$ is set by qubit properties (materials, design, and
fabrication), while $\tau_F$ is set by amplifier properties (noise
temperature, compression point, etc.) and other device properties
(maximum tolerable signal power on the qubit, coupling strength
between qubit and measurement apparatus, etc.)---all of which are
fixed in an experiment.

The rough intuition is that $\T1m$ sets an upper limit to $\tau_M$. As
long as $\tau_M\ll\T1m$, increasing $\gamma_M$ decreases the soft-flip
error rate. As $\tau_M$ approaches and surpasses $\T1m$, decay events
cause the soft-flip error rate to increase. Similarly, $\tau_F$ sets
the lower limit on $\tau_M$. If $\tau_M$ is comparable to $\tau_F$,
the soft measurement distributions are dominated by Gaussian noise. As
$\frac{\tau_M}{\tau_F}$ increases above 1, the distributions become
distinguishable, as the overlap decays exponentially with
$\frac{\tau_M}{\tau_F}$ (for $\tau_M\ll\T1m$). A low soft flip error
rate can be achieved if $\tau_F\ll\tau_M\ll\tau_A$.

\subsection{Soft-flip probabilities}

We can convert this to conditional probabilities of soft flips by
integrating, which yields
\begin{multline}
  {\mathbb P}({\rm soft~flip}|0) =
  \frac{1}{2}\left[1 + {\rm erf}\left(\frac{\beta}{\sqrt{2}}-\sqrt{\frac{\tau_M}{2\tau_F}}\right)\right],
\end{multline}
\begin{multline}
  {\mathbb P}({\rm soft~flip}|1) =
  \frac{1}{2}-
  \frac{1}{2}{\rm erf}\left(\frac{\beta}{\sqrt{2}}-\sqrt{\frac{\tau_M}{2\tau_F}}\right)+\\
  \frac{1}{2}{\rm exp}\left[
    \beta\sqrt{\frac{\tau_M\tau_F}{4\tau_A^2}} + \frac{\tau_M\tau_F}{8\tau_A^2}-\frac{\tau_M}{2\tau_A}
    \right]\times\\
  \left\{
      {\rm erf}\left[
        \frac{\beta}{\sqrt{2}}-\sqrt{\frac{\tau_M}{2\tau_F}}+\sqrt{\frac{\tau_M\tau_F}{8\tau_A^2}}
        \right]\right.\\
      \left.
           -{\rm erf}\left[
             \frac{\beta}{\sqrt{2}}+\sqrt{\frac{\tau_M}{2\tau_F}}+\sqrt{\frac{\tau_M\tau_F}{8\tau_A^2}}
             \right]
    \right\}
\end{multline}
where $\beta$ is the decision boundary for the hard decisions based on
soft outcomes. The maximum likelihood hardening map corresponds to a
particular foice for $\beta$ as a function of $\tau_M/\tau_F$ and
$\tau_M/\tau_A$. We vary $\beta$ and $\tau_M$ to optimize hard
measurement performance.

\section{Inclusive and exclusive circuit noise models}
\label{app:inclusive_exclusive}

Here we discuss the distinction between two variations of the circuit noise model: an inclusive model in which all faults are independent, and an exclusive model in which some faults are mutually exclusive.
In \cref{subsec:circuit_model} we presented an inclusive circuit model. 
The standard model of circuit noise model is an exclusive model, in which each of the components of a circuit can fail in the following specific ways:
\begin{itemize}
	\item Idle qubits waiting for CNOT gates: fail with probability $\tilde{p}_{IG}$. When a failure occurs, an element of the set $\{ X, Y, Z \}$ is uniformly sampled and applied. 
	\item Idle qubits waiting for  measurements: fail with probability $\tilde{p}_{IM}$. When a failure occurs, an element of the set $\{ X, Y, Z \}$ is uniformly sampled and applied.
	\item CNOT gates: fail with probability $\tilde{p}_{\text{CNOT}}$. When a failure occurs, an element of the set 
	of 15 non-trivial weight-two Pauli operators
	is uniformly sampled and applied.
\end{itemize}
Here faults on different locations are applied independently of one another, although faults which occur on the same operation are exclusive.  
The measurement in the inclusive and exclusive models are treated identically.
The standard exclusive circuit noise model shown here and the inclusive circuit noise model described in \cref{subsec:circuit_model} exactly equivalent as proven in Appendix~E of~\cite{chao2020optimization}. 
The equivalence comes with a map for the probabilities between the two models:
\begin{itemize}
	\item $\tilde{p}_{IG} = p_{IG} + O(p_{IG}^2)$.
	\item $\tilde{p}_{IM}= p_{IM} + O(p_{IM}^2)$.
	\item $\tilde{p}_{\text{CNOT}} = p_{\text{CNOT}} +O(p_{\text{CNOT}}^2)$.
\end{itemize}
For low error rates the second order corrections (which can be calculated using the formulas specified in Ref.~\cite{gidneyDecorrelatedDepBlog}) are small. 
For example, for fault probabilities of one percent and below, the corrections are at most $\num{5e-5}$.
We therefore neglect any corrections between the exclusive and inclusive circuit noise model throughout this work.

\section{Logical error rate per round}
\label{app:bc_indep}

Here we discuss the model described in \cref{subsec:logical_error_rate} that we use to extract an estimate of the logical error rate per round from the probability $p_d^{X~\text{fail}}(T)$ of a logical $X$ failure of the error correction protocol with a distance-$d$ code over $T$ rounds. 
Our model is similar to that in Reference~\cite{fowler2021repetition} and is based on the assumption that during each round a logical $X$ error occurs with probability $\bar{p}/2$ independently of other rounds. 
After $T$ rounds, there is a logical $X$ failure in this model if an odd number of rounds have experienced logical $X$ errors, which occurs with probability
\begin{align}
\hat{p}^{X~\text{fail}}(T) = \frac{1 - (1-\bar{p})^T}{2},\label{eq:failure_rate_per_round_app}
\end{align}
which is a repeat of \cref{eq:failure_rate_per_round} in the main text.
We use hats to distinguish the model from the $X$ failure of the protocol itself.

While the assumptions that define this model may not apply to the protocol itself, for a carefully selected value of $\bar{p}$ and for large $T$ the formula $\hat{p}^{X~\text{fail}}(T)$ fits the data taken for the protocol $p^{X~\text{fail}}_d(T)$ very well; see \cref{fig:rate_convergence}.
The good fit can be understood from the fact that each of the decoders acts somewhat locally in the 3D decoding graph, and the matching of errors which are well separated in this graph occurs almost independently. 
In the regime of low logical error rate, this can be expected to lead to logical errors (which are sparse in this regime) which are also independent of one another. 

\begin{figure}
    \includegraphics[width=\columnwidth]{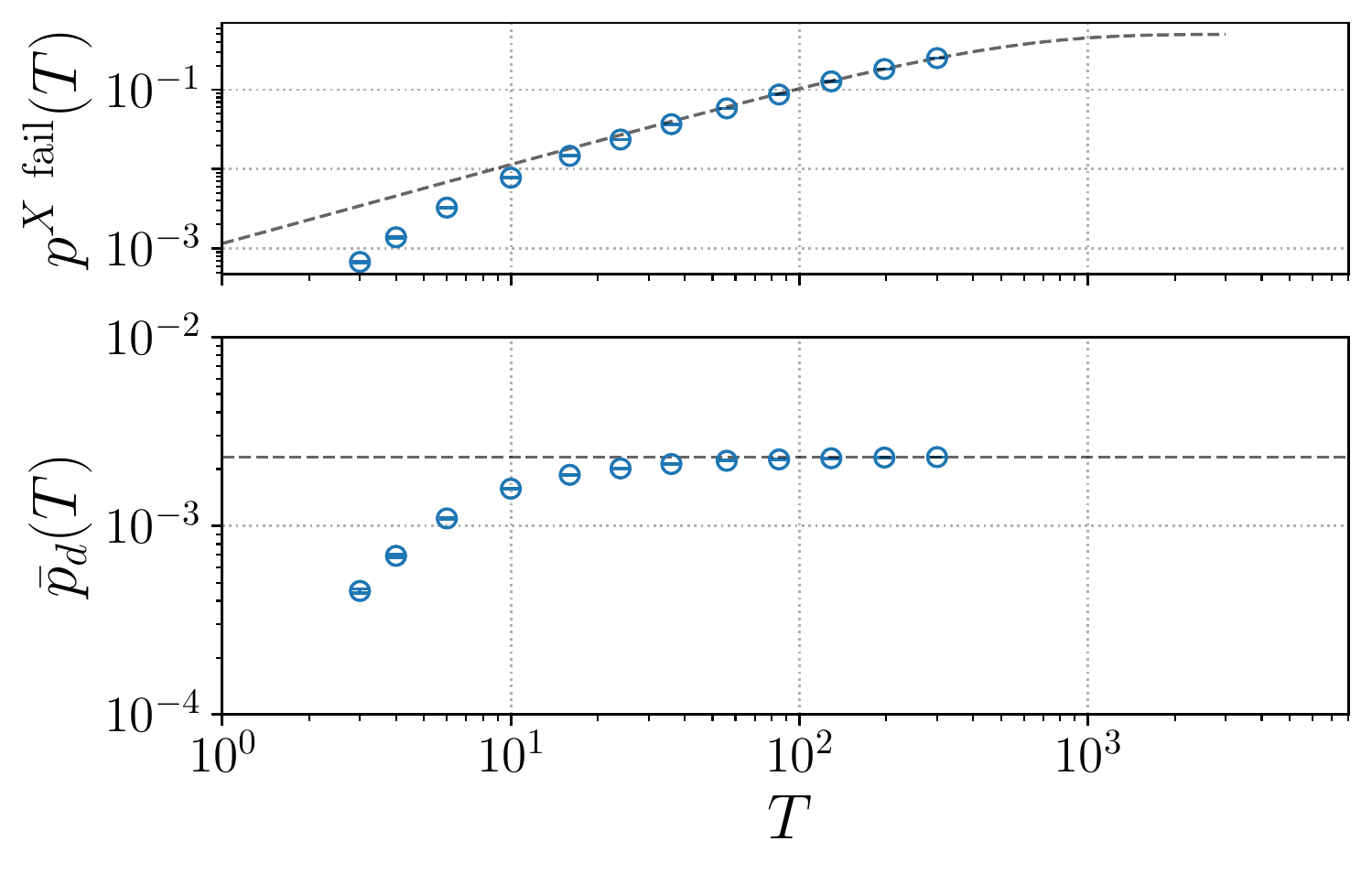}
    \caption{\label{fig:rate_convergence}
    Convergence of $\bar{p}_d(T)$ calculated using \cref{eq:failure_rate_per_round_data_T} with $T$ using the soft phenomenological noise model.
    (a) The protocol's $X$ failure probability $p_d^{X~\text{fail}}(T)$ for a distance $d=15$ surface code for a range of $T$.
    With a dashed line we show the model, namely $\hat{p}^{X~\text{fail}}(T) = (1 - (1-\bar{p})^T)/2$ evaluated at $\bar{p}=\bar{p}_d(300)$.
    (b) For each data point $p_d^{X~\text{fail}}(T)$, we estimate $\bar{p}_d(T)$ according to \cref{eq:failure_rate_per_round_data_T}. 
    We observe a clear convergence of $\bar{p}_d(T)$ toward an asymptotic value by $T=100$. We indicate $\bar{p}_d(300)$ by a dashed line.
    }
\end{figure}

The above model includes no explicit restriction on the correlation between logical $X$ and logical $Z$ failures. 
Let us explore the consequences of some additional assumptions.
We can make the additional assumption that during each round a logical $Z$ error occurs with probability $\bar{p}/2$ independently of other rounds, such that $\hat{p}^{Z~\text{fail}}(T)=\hat{p}^{X~\text{fail}}(T)$.
If furthermore we assume that $X$ and $Z$ failures are uncorrelated, then the overall probability $\hat{p}^{\text{fail,unc}}(T)$ of either a logical $X$ failure, a logical $Z$ failure or both is
\begin{equation}
\hat{p}^{X~\text{fail}}(T)+\hat{p}^{Z~\text{fail}}(T) - \hat{p}^{X~\text{fail}}(T)\hat{p}^{Z~\text{fail}}(T).
\end{equation}
In this case of equal $X$ and $Z$ logical failure probabilities without correlation, $\bar{p}$ approximately corresponds to the overall failure probability of a single round $\hat{p}^{\text{fail,unc}}(1) = \bar{p} - \bar{p}^2/4$. 
As we are primarily interested in the regime of small $\bar{p}$, the $O(\bar{p}^2)$ term can be neglected, which justifies our reference to $\bar{p}$ as the logical failure probability per round.

\section{Simulation Methods}
\label{app:simulation_protocol}
Here we describe some further details of the simulation methods we used to collect and analyze the data presented in the main text.

All of our data involves Monte Carlo sampling to estimate the probability $q$ of success of the error correction protocol described in \cref{subsec:logical_error_rate} given some noise model, code distance, and decoder. 
Given $n$ trials of which $k$ succeed, we use Bayesian inference to estimate the parameter of the corresponding Bernoulli distribution $\mathrm{Bernoulli}(q)$~\cite{gelman2013bayesian}.
We take the Jeffreys prior, so that the posterior distribution of $q$ given the data is $\mathrm{Beta}(\frac{1}{2} + k, \frac{1}{2} + n - k)$.
We report the mean of the posterior distribution $\frac{k+1/2}{n+1/2}$ and 68\% equal-tailed credible intervals calculated directly from the inverse cumulative distribution function of the Beta distribution.
In many data points the intervals indicating statistical uncertainty (indicated by vertical lines with end caps) are smaller than the symbols which mark the data point.

For a noise model parameterized by a single parameter $p$, we often seek to identify a threshold value $p^*$ of $p$ such as in \cref{fig:thresh_sweep_pheno} and \cref{fig:low_p_circuit_noise}.
We obtain a set of data points $p^{\text{fail}}_d(d)$ for a range of distances and values of $p$ in the vicinity of the suspected value of $p^*$.
Following the usual universal scaling ansatz for critical points in the study of phase transitions, we fit a quadratic in the scaled variable $x=(p-p^*) d^{1/\nu}$ to our data in this regime.

To minimize sensitivity to asymmetric measurement errors in models such as the amplitude damping model of \cref{subsec:amplitude_damping_model}, at each measurement round, each ancilla qubit is initialized in a known measurement eigenstate selected uniformly at random.

\end{document}